\documentclass[a4paper,USenglish,thm-restate,cleveref,numberwithinsect]{lipics-v2021}
\pdfoutput=1
\hideLIPIcs
\nolinenumbers

\title{Classification of Local Optimization Problems in Directed Cycles}
\author{Thomas Boudier}{GSSI}{thomas.boudier@gssi.it}{https://orcid.org/0009-0002-0119-9333}{}
\author{Fabian Kuhn}{University of Freiburg}{}{}{}
\author{Augusto Modanese}{CISPA Helmholtz Center for Information Security}{augusto.modanese@cispa.de}{https://orcid.org/0000-0003-0518-8754}{Most of this work done while affiliated with Aalto University}
\author{Ronja Stimpert}{Aalto University}{ronja.stimpert@aalto.fi}{https://orcid.org/0009-0000-5794-0034}{}
\author{Jukka Suomela}{Aalto University}{jukka.suomela@aalto.fi}{https://orcid.org/0000-0001-6117-8089}{}
\authorrunning{T. Boudier, F. Kuhn, A. Modanese, R. Stimpert, and J. Suomela}
\Copyright{Thomas Boudier, Fabian Kuhn, Augusto Modanese, Ronja Stimpert, and Jukka Suomela}
\ccsdesc[500]{Theory of computation~Distributed algorithms}
\keywords{LOCAL model, optimization, cycles, meta-algorithms}
\funding{This work was supported in part by the Research Council of Finland, Grant 359104, and by the Quantum Doctoral Education Pilot, the Ministry of Education and Culture, decision VN/3137/2024-OKM-4.}
\acknowledgements{This project was initiated at the Research Workshop on Distributed Algorithms (RW-DIST 2025) in Freiburg, Germany; we would like to thank all workshop participants and organizers for inspiring discussions.}

\usepackage{amsmath}
\usepackage{amsthm}
\usepackage{amssymb}
\usepackage{complexity}
\usepackage{cite}
\usepackage{booktabs}
\usepackage{enumitem}
\usepackage{csquotes}
\usepackage{mathtools,etoolbox}
\usepackage{mathdots}
\usepackage{xcolor}
\usepackage{amsfonts}
\usepackage{local-models}

\let\R\relax
\usepackage{ammacros}

\newcommand{\bopt}{{\beta_{\operatorname{opt}}}}
\newcommand{\bflex}{{\beta_{\operatorname{flex}}}}
\newcommand{\bcoprime}{{\beta_{\operatorname{coprime}}}}
\newcommand{\bgap}{{\beta_{\operatorname{gap}}}}
\newcommand{\bconst}{{\beta_{\operatorname{const}}}}
\newcommand{\Gopt}{{G_{\operatorname{opt}}}}
\newcommand{\Gflex}{{G_{\operatorname{flex}}}}

\newcommand{\Ggap}{{G_{\operatorname{gap}}}}
\newcommand{\Gconst}{{G_{\operatorname{const}}}}
\newcommand{\dflex}{{\delta_{\operatorname{flex}}}}
\newcommand{\dgap}{{\delta_{\operatorname{gap}}}}

\DeclareMathOperator*{\obj}{obj}
\DeclareMathOperator*{\aggr}{aggr}
\DeclareMathOperator*{\opt}{OPT}
\DeclareMathOperator{\expectation}{E}

\usepackage{amssymb,stackengine,scalerel}

\begin{document}

\maketitle

\begin{abstract}
We present a complete classification of the distributed computational complexity of local optimization problems in directed cycles for both the deterministic and the randomized LOCAL model. We show that for any local optimization problem $\Pi$ (that can be of the form min-sum, max-sum, min-max, or max-min, for any local cost or utility function over some finite alphabet), and for any \emph{constant} approximation ratio $\alpha$, the task of finding an $\alpha$-approximation of $\Pi$ in directed cycles has one of the following complexities:
  \begin{enumerate}
    \item $O(1)$ rounds in deterministic LOCAL, $O(1)$ rounds in randomized LOCAL,
    \item $\Theta(\log^* n)$ rounds in deterministic LOCAL, $O(1)$ rounds in randomized LOCAL,
    \item $\Theta(\log^* n)$ rounds in deterministic LOCAL, $\Theta(\log^* n)$ rounds in randomized LOCAL,
    \item $\Theta(n)$ rounds in deterministic LOCAL, $\Theta(n)$ rounds in randomized LOCAL.
  \end{enumerate}
Moreover, for any given $\Pi$ and $\alpha$, we can determine the complexity class automatically, with an efficient (centralized, sequential) meta-algorithm, and we can also efficiently synthesize an asymptotically optimal distributed algorithm.

Before this work, similar results were only known for local search problems (e.g., locally checkable labeling problems). The family of local optimization problems is a strict generalization of local search problems, and it contains numerous commonly studied distributed tasks, such as the problems of finding approximations of the maximum independent set, minimum vertex cover, minimum dominating set, and minimum vertex coloring.
\end{abstract}

\thispagestyle{empty}
\clearpage
\pagenumbering{arabic}

\section{Introduction}

In this work we study distributed algorithms for \textbf{local optimization problems}---these include a wide range of familiar graph optimization problems such as \emph{maximum independent set}, \emph{minimum vertex cover}, \emph{minimum dominating set}, and \emph{minimum vertex coloring}. We show that in \textbf{directed cycles}, essentially all questions about the existence of distributed algorithms for local optimization problems can be answered in a mechanical and efficient manner. We give a \textbf{meta-algorithm} that answers the following questions, for any given local optimization problem $\Pi$ and any given approximation factor $\alpha$:
\begin{enumerate}
  \item How fast can we find an $\alpha$-approximation of $\Pi$ in the \detlcl model?
  \item How fast can we find an $\alpha$-approximation of $\Pi$ in the \randlcl model?
\end{enumerate}

\subsection{Broader Context}

There is a long line of prior work that has studied \textbf{local search problems}, and in particular \emph{locally checkable labeling problems} or LCLs; these are problems where the task is merely to satisfy all local constraints, i.e., the task is to find a feasible solution that does not need to be optimal in any sense. Familiar examples include the problems of finding a \emph{maximal} independent set, or a valid coloring with $\Delta+1$ colors in a graph of maximum degree $\Delta$.

While a lot of early work focused on specific concrete examples of such tasks, e.g., exactly what is the distributed complexity of finding a $3$-coloring of a cycle \cite{cole-vishkin-1986-deterministic-coin-tossing-with,linial-1992-locality-in-distributed-graph-algorithms}, Naor and Stockmeyer \cite{naor-stockmeyer-1995-what-can-be-computed-locally} initiated the study of \textbf{meta-computational questions} in this context: for any given local search problem $\Pi$, can we systematically determine what the distributed computational complexity of $\Pi$ is?

It turns out that such questions are in general \emph{undecidable}, already in the case of seemingly simple graph classes such as 2-dimensional grids \cite{naor-stockmeyer-1995-what-can-be-computed-locally,brandt-hirvonen-etal-2017-lcl-problems-on-grids}. However, in graph families such as \textbf{cycles, paths, and trees} many such questions are decidable \cite{chang-studeny-suomela-2023-distributed-graph-problems,balliu-brandt-etal-2022-efficient-classification-of,balliu-brandt-etal-2023-locally-checkable-problems-in,balliu-brandt-etal-2019-the-distributed-complexity-of}. Furthermore, while such questions are \emph{computationally hard} if we have input labels \cite{balliu-brandt-etal-2019-the-distributed-complexity-of}, many of these questions can be solved efficiently (in time that is polynomial in the size of the description of problem~$\Pi$) in the case of \textbf{unlabeled} graphs \cite{chang-studeny-suomela-2023-distributed-graph-problems,balliu-brandt-etal-2022-efficient-classification-of}.

In particular, if we look at LCL problems on unlabeled directed cycles, they are very well understood: the round complexity of any such problem is $O(1)$, $\Theta(\log^* n)$, or $\Theta(n)$ in all of our usual models of distributed computing (\detlcl, \randlcl, deterministic \congest, and randomized \congest), and there is a computer program that automatically finds out the complexity class for any given problem \cite{tereshchenko-studeny-2021-lcl-classifier}.

However, many problems that have been studied in the field of distributed graph algorithms are \textbf{local optimization problems}, e.g., \cite{kuhn-moscibroda-wattenhofer-2016-local-computation,czygrinow-hanckowiak-wawrzyniak-2008-fast-distributed,goos-suomela-2014-no-sublogarithmic-time-approximation,goos-hirvonen-suomela-2013-lower-bounds-for-local,lenzen-wattenhofer-2008-leveraging-linial-s-locality,heydt-kublenz-etal-2025-distributed-domination-on,harris-2020-distributed-local-approximation-algorithms}, and much less is currently known about them. There is a large collection of isolated examples that apply to specific problems, but there is no complete classification of possible distributed round complexities of local optimization problems, and there is no meta-algorithm that is able to automatically determine the complexity of a given problem.

Furthermore, what makes the question particularly intriguing is that it is known that the complexity landscape has to be fundamentally different, already in the simplest case of unlabeled directed cycles. To see this, consider the problem of finding a $1.1$-approximation of a minimum dominating set in cycles. Thanks to \cite{czygrinow-hanckowiak-wawrzyniak-2008-fast-distributed}, it is known that in the \detlcl model, the complexity of this problem is $\Theta(\log^* n)$ rounds, while in the \randlcl model, the complexity is $O(1)$ rounds. The same never happens with any LCL problem: for them, $O(1)$ rounds in \randlcl always implies $O(1)$ rounds in \detlcl \cite{chang-kopelowitz-pettie-2019-an-exponential-separation,naor-stockmeyer-1995-what-can-be-computed-locally}.

Hence we have three questions that we aim to answer in this work:
\begin{enumerate}
  \item What is the right framework for formalizing and representing local optimization problems (analogous to LCL problems in the context of search problems)?
  \item What does the complexity landscape look like for local optimization, especially when we compare \detlcl with \randlcl?
  \item Can we design a meta-algorithm that automatically determines the complexity class for any given local optimization problem and any given approximation ratio?
\end{enumerate}

\subsection{Formalism}\label{ssec:intro-formalism}

In this work, we look at the simplest nontrivial case: local optimization problems in directed cycles. Let us first recall how we can specify LCL problems in this setting: There is some finite set of output labels $\Gamma$ and verification distance $r$, and we specify which $(r+1)$-tuples of output labels are valid. For example, for the task of $3$-coloring, we have $\Gamma = \{1,2,3\}$, and we can select $r=1$, and the set of valid tuples of output labels is $\{12, 13, 21, 23, 31, 32\}$. A feasible solution is a labeling $s \colon V \to \Gamma$ that is everywhere locally valid: for each node $v$, the sequence of output labels $s(N_r(v))$ formed by $v$ and its $r$ successors is a valid tuple.

\begin{remark*}
  In cycles, it does not really matter whether we look at $v$ and its $r$ successors, or $v$ and its $r$ predecessors, or $v$ and its radius-$r/2$ symmetric neighborhood. In this work we follow the convention that a problem is defined in terms of $r$ successors.
\end{remark*}

In local optimization problems we keep the same basic setup, but now we associate a value $c(x)$ with each valid $(r+1)$-tuple $x$. This way, each labeling $s$ associates a \emph{local value} $c(s(N_r(v)))$ with each node (we will use terms \emph{cost} or \emph{utility} as appropriate to refer to the value, depending on the problem). The optimization task is one of the following, combining an \emph{objective function} $\obj \in \{\max,\min\}$ and an \emph{aggregation function} $\aggr \in \{\max,\min,\sum\}$:
\begin{enumerate}
  \item $\max \min$: maximize the minimum of local values,
  \item $\max \sum$: maximize the sum of local values,
  \item $\min \max$: minimize the maximum of local values,
  \item $\min \sum$: minimize the sum of local values.
\end{enumerate}
We will use the value $\bot$ to indicate forbidden label combinations (which can also be interpreted as e.g.\ infinite cost). We are interested in approximations, and we follow the convention that approximation ratio $\alpha$ is always at least $1$, so e.g.\ a $2$-approximation for a minimization problem is a feasible solution with total cost at most twice the optimum, and a $2$-approximation for a maximization problem is a feasible solution with total utility at least half the optimum.

\subsection{Examples of Problems}\label{ssec:intro-examples}

The following examples will hopefully clarify the formalism, and also demonstrate that it is broadly applicable. We will start with familiar, classic graph optimization problems.

\begin{example}[maximum independent set]\label{ex:max-ind-set}
  The maximum independent set problem can be encoded as follows, with $1$ indicating nodes that form the independent set:
  \begin{itemize}
  \item $\Gamma = \{0,1\}$, $r = 1$, $\obj = \max$, $\aggr = \sum$,
  \item $c(00) = 0$,
        $c(01) = 0$,
        $c(10) = 1$,
        $c(11) = \bot$.
  \end{itemize}
  Here any valid independent set has a nonnegative total value, and the value is equal to the size of the independent set.
\end{example}

\begin{example}[minimum dominating set]\label{ex:min-dom-set}
  The minimum dominating set problem can be encoded as follows:
  \begin{itemize}
  \item $\Gamma = \{0,1\}$, $r = 2$, $\obj = \min$, $\aggr = \sum$,
  \item $c(000) = \bot$,
  $c(001) = 0$,
  $c(010) = 0$,
  $c(011) = 0$,\\
  $c(100) = 1$,
  $c(101) = 1$,
  $c(110) = 1$,
  $c(111) = 1$.
  \end{itemize}
\end{example}

\begin{example}[minimum vertex coloring]\label{ex:min-ver-col}
  To encode the minimum vertex coloring problem, we can use e.g.\ the following formulation:
  \begin{itemize}
  \item $\Gamma = \{1,2,3\}$, $r = 1$, $\obj = \min$, $\aggr = \max$,
  \item $c(ij) = i$ if $i \ne j$,
        $c(ii) = \bot$.
  \end{itemize}
  Here for example a valid $2$-coloring (using only colors $1$ and $2$) has total cost $2$, and a valid $3$-coloring has total cost $3$.
\end{example}

\begin{example}[maximum domatic partition]\label{ex:max-dom-par}
  Let us now look at a slightly more interesting example: domatic partition. In this problem, the task is to partition the set of nodes into as many disjoint dominating sets as possible; in a cycle we can have at most $3$ such sets. While more clever ad-hoc encodings are possible, we will use the following approach that generalizes to a wide range of similar problems. We use labels of the form $x_i$, where $i$ indicates that we claim to have a solution with exactly $i$ dominating sets, and $x$ indicates that this node claims to be in a dominating set with label $x$. We then need to ensure that nodes also agree on the number of dominating sets, and each dominating set is also valid. This will suffice:
  \begin{itemize}
    \item $\Gamma = \{a_1,a_2,b_2,a_3,b_3,c_3\}$, $r = 2$, $\obj = \max$, $\aggr = \min$,
    \item $c(a_{1}a_{1}a_{1}) = 1$,
    \item $c(a_{2}a_{2}b_{2}) = c(a_{2}b_{2}a_{2}) = c(a_{2}b_{2}b_{2}) = c(b_{2}a_{2}a_{2}) = c(b_{2}a_{2}b_{2}) = c(b_{2}b_{2}a_{2}) = 2$,
    \item $c(a_{3}b_{3}c_{3}) = c(b_{3}c_{3}a_{3}) = c(c_{3}a_{3}b_{3}) = c(a_{3}c_{3}b_{3}) = c(c_{3}b_{3}a_{3}) = c(b_{3}a_{3}c_{3}) = 3$,
    \item $c(x) = \bot$ otherwise.
  \end{itemize}
\end{example}

\subsection{A Classification Example}\label{ssec:intro-sloppy}

So far we have seen examples of natural $\max \sum$, $\min \sum$, $\max \min$, and $\min \max$ problems. We will introduce a somewhat artificial problem, which is engineered to illustrate many key phenomena in a single, concise package.

\begin{example}[sloppy coloring]\label{ex:sloppy-col}
  In this problem, we can produce a proper $2$-coloring with colors $\{b,w\}$, a proper $3$-coloring with colors $\{1,2,3\}$, or a somewhat sloppy $3$-coloring with colors $\{a,b,c\}$. The problem is defined as follows:
  \begin{itemize}
    \item $\Gamma = \{b,w,1,2,3,a,b,c\}$, $r = 1$, $\obj = \min$, $\aggr = \sum$,
    \item $c(bw) = c(wb) = 1$,
    \item $c(12) = c(21) = c(23) = c(32) = c(13) = c(31) = 2$,
    \item $c(ab) = c(ba) = c(bc) = c(cb) = c(ac) = c(ca) = 3$,
    \item $c(aa) = 100$,
    \item $c(x) = \bot$ otherwise.
  \end{itemize}
\end{example}

Let us try to gain a bit more intuition on the problem first. If we manage to produce a proper $2$-coloring, the total cost in an $n$-cycle is $n$, and if we manage to produce a proper $3$-coloring using labels $\{1,2,3\}$, the total cost is $2n$. If we resort to labels $\{a,b,c\}$ and produce a proper $3$-coloring, the total cost is $3n$. If a sublinear number of edges uses the color pair $aa$, the total cost will be $3n + o(n)$, and if we produce a trivial solution with all nodes colored $a$, the total cost will be $100n$.

Now the key question is this: given some constant $\alpha \ge 1$, what is the distributed round complexity of finding an $\alpha$-approximation of a sloppy coloring? It turns out that the complete answer in this case looks like this:
\begin{itemize}
  \item For $1 \le \alpha < 2$, the complexity is $\Theta(n)$ in both the \detlcl and \randlcl models. The optimal algorithm is a trivial brute-force algorithm that finds a $2$-coloring if it exists, and otherwise produces a $3$-coloring. There is no $o(n)$-round algorithm for finding a $1.999$-approximation or better.
  \item For $2 \le \alpha \le 3$, the complexity is $\Theta(\log^* n)$ in both the \detlcl and \randlcl models. We can always produce a $3$-coloring, using the Cole--Vishkin algorithm \cite{cole-vishkin-1986-deterministic-coin-tossing-with}, and this is always a $2$-approximation. There is no $o(\log^* n)$-round deterministic or randomized algorithm for finding a $3$-approximation or better.
  \item For $3 < \alpha < 100$, the complexity is $\Theta(\log^* n)$ in the \detlcl model but $O(1)$ in the \randlcl model. The randomized algorithm essentially finds a random ruling set to divide the cycle in segments. The algorithm uses color $a$ at the segment boundaries and fills in all sufficiently short segments with $b,c,b,c,\dotsc$. Super-constant-length segments are rare, and there we can use the more costly solution $a,a,a,a,\dotsc$. By adjusting the parameters, we can achieve approximation ratio $3+\epsilon$ for any constant $\epsilon > 0$ with high probability (the worst case being that the input is an even cycle that could be $2$-colored). There is no $o(\log^* n)$-round deterministic algorithm for finding a $99.999$-approximation or better.
  \item For $\alpha \ge 100$, the complexity is $O(1)$ in both the \detlcl and \randlcl models. The algorithm simply outputs $a$ everywhere.
\end{itemize}
As we will see, this turns out to be a well-representative problem: for \emph{any} local optimization problem in our problem family, there are at most these four complexity classes, and the optimal algorithms are also essentially generalizations of the above four algorithms. We can automatically deduce all relevant $\alpha$ thresholds between the classes.

\subsection{Contributions}

We show that for any local optimization problem $\Pi$ (in the formalism of \cref{ssec:intro-formalism}), and for any constant $\alpha$, the task of finding an $\alpha$-approximation of $\Pi$ in directed cycles has one of these complexities:
\begin{enumerate}
  \item $O(1)$ rounds in \detlcl, $O(1)$ rounds in \randlcl,
  \item $\Theta(\log^* n)$ rounds in \detlcl, $O(1)$ rounds in \randlcl,
  \item $\Theta(\log^* n)$ rounds in \detlcl, $\Theta(\log^* n)$ rounds in \randlcl,
  \item $\Theta(n)$ rounds in \detlcl, $\Theta(n)$ rounds in \randlcl.
\end{enumerate}
Moreover, for any given $\Pi$ and $\alpha$, we can determine the complexity class automatically, with an efficient (centralized, sequential) meta-algorithm, and we can also efficiently synthesize an asymptotically optimal distributed algorithm. In particular, all local optimization problems similar to those in \cref{ssec:intro-examples,ssec:intro-sloppy} are now fully understood in the case of directed cycles.

One key consequence of our work is that increasing the running time from
$\Theta(\log^* n)$ to, say, $\Theta(\log n)$ or $\Theta(\sqrt{n})$ does not give any
\emph{constant} advantage in the approximation ratio, for any local optimization
problem in directed cycles. 
If we are interested in finding a $(1+1/n)$-approximation, for example, then the
situation is more complex, and indeed we might still obtain some sub-constant
advantage by doing so.
However, as we show, we will never be able to improve our approximation ratio
from, say, $1.2345$ to $1.2344$ by such means. You will need to jump from $\Theta(\log^* n)$ rounds all the way to $\Theta(n)$ rounds for such an improvement.

We emphasize that the case of max-min and min-max problems can be handled by known techniques, as each fixed $\alpha$ essentially defines a sub-problem that is an LCL. The main novel part is $\min\sum$ and $\max\sum$ problems, which require genuinely new techniques and ideas. However, we wanted to give a unified formalization of all of these problems, and show that the same definitions and concepts indeed largely apply to both problem families.

\subsection{Techniques}

We take a three-step approach:

\subparagraph*{Step 1: Defining the problem parameters (\cref{sec:problem-param}).}

We first \emph{define} seven parameters $\bopt$, $\bflex$, $\dflex$, $\bcoprime$, $\bgap$, $\dgap$, and $\bconst$ that capture essentially everything that we need to know about a given optimization problem~$\Pi$.

To do this, we define a node-weighted de Bruijn graph $G$ that captures the entire structure of~$\Pi$, including the values. The intuition is that for any input cycle $C$, a feasible solution of~$\Pi$ in $C$ corresponds to a closed walk in $G$, with the same value. Then we proceed to define four subgraphs of $G$:
\begin{enumerate}
  \item We let $\Gopt = G$. We define $\bopt$ based on the cheapest closed walk in $\Gopt$.
  \item We zoom into \emph{flexible components} $\Gflex \subseteq \Gopt$, in the spirit of \cite{chang-studeny-suomela-2023-distributed-graph-problems}. In brief, a node $v$ of $G$ is \emph{$K$-flexible} if there are self-returning walks from $v$ back to $v$ of all possible lengths $K, K+1, K+2, \dotsc$. A node is \emph{flexible} if it is $K$-flexible for some $K$. It follows that in each strongly connected component either all or none of the nodes are flexible, and we say that a component is flexible if the nodes are flexible. We define parameters $\bflex$ and $\dflex$ based on the cheapest closed walks in $\Gflex$.
  \item We zoom into \emph{flexible components with self-loops} $\Ggap \subseteq \Gflex$; we simply take those flexible components that also contain nodes with a self-loop. We define parameters $\bgap$ and $\dgap$ based on the cheapest closed walks in $\Ggap$.
  \item We form the subgraph that consists only of \emph{self-loops} $\Gconst \subseteq \Ggap$. We define parameter $\bconst$ based on the cheapest closed walks (that is, cheapest self-loops) in $\Gconst$.
\end{enumerate}
The above six parameters suffice for $\min\sum$ and $\max\sum$ problems. For max-min and min-max problems we also need to introduce a parameter $\bcoprime$ that, in brief, captures the cheapest pair of closed walks that have coprime lengths.

\begin{table}
  \caption{All $\beta$ values defined in \cref{sec:problem-param} for the examples in \cref{ssec:intro-examples}. Missing values are not essential to capture the problem's classification, regardless of whether the value can be obtained or not.}
  \label{table:strategies-ex}
  \centering
  \begin{tabular}{llllllll}
    \toprule
    Example & $\bopt$ & $\bflex$ & $\dflex$ & $\bcoprime$ & $\bgap$ & $\dgap$ & $\bconst$\\
    \midrule
    \cref{ex:max-ind-set}: max.~independent set & $\frac{1}{2}$ & $\frac{1}{2}$ & true & - & $\frac{1}{2}$ & true & $0$\\
    \cref{ex:min-dom-set}: min.~dominating set & $\frac{1}{3}$ & $\frac{1}{3}$ & true & - & $\frac{1}{3}$ & true & $1$\\
    \cref{ex:min-ver-col}: min.~vertex coloring & $2$ & - & false & $3$ & $\bot$ & false & $\bot$\\
    \cref{ex:max-dom-par}: max.~domatic partition & $3$ & - & false & $2$ & $1$ & false & $1$\\
    \cref{ex:sloppy-col}: sloppy coloring & $1$ & $2$ & false & - & $3$ & true & $100$\\
    \bottomrule
  \end{tabular}
\end{table}

\Cref{table:strategies-ex} shows concrete examples of these parameter values for the examples from \cref{ssec:intro-examples}. We emphasize that these parameters have a purely graph-theoretic definition; their definitions do not make any references to e.g.\ a particular model of distributed computing. However, in step 3 we will see how these parameters connect with the task at hand.

\subparagraph*{Step 2: The problem parameters are efficiently computable (\cref{sec:efficient}).}

We show that given a description of any local optimization problem $\Pi$, we can efficiently compute all necessary parameters out of $\bopt$, $\bflex$, $\dflex$, $\bcoprime$, $\bgap$, $\dgap$, and $\bconst$. In each case, we first construct the de Bruijn graph $G$, which has size polynomial in the size of the problem description. Then we show that we can efficiently construct the relevant subgraph ($\Gopt, \Gflex, \Ggap, \Gconst$). Finally, we show that we can design an efficient graph algorithm for determining the corresponding parameter values based on the structure of the subgraph.

For example, to determine the value of $\bopt$ for a $\min\sum$ problem, it is sufficient to find a cycle in $\Gopt = G$ that minimizes the \emph{average} cost; we show that all of this can be indeed done efficiently, largely relying on standard graph algorithms.

Determining e.g.\ $\bflex$ and $\dflex$ is more interesting, and there we will build on and extend prior work related to flexibility \cite{chang-studeny-suomela-2023-distributed-graph-problems}. In brief, it is no longer sufficient to find a single cheap cycle, but we also need to find a \emph{pair of low-average-cost cycles with coprime lengths}. We refer to \cref{sec:efficient} for the details of how all seven parameters can be determined efficiently.

\subparagraph*{Step 3: The problem parameters determine distributed complexity (\cref{sec:alg}).}

Finally, we show that the complexity of any given problem for any approximation
ratio $\alpha$ can be determined using parameters $\bopt$, $\bflex$, $\dflex$, $\bcoprime$, $\bgap$, $\dgap$, and $\bconst$. Furthermore, we can also automatically construct an asymptotically-fastest algorithm for any given $\alpha$.

The intuition here is that $\bopt$ provides a baseline against which we can compare. For example, for $\min\sum$ problems, for infinitely many values of $n$, there exists a solution of total cost $\bopt n$, and for no value of $n$ we have cheaper solutions. So an average cost of $\bopt$ per node represents the best-case scenario for the optimal solution. In particular, if we can find a solution of cost at most $\alpha \bopt n$ for all $n$, it will be clearly an $\alpha$-approximation.

For a full classification, we will now need both upper bounds and matching lower bounds. Let us first consider upper bounds. It is perhaps easier to consider what we can do with a given time budget, using $\min\sum$ problems as an example:
\begin{enumerate}
  \item If we can use $O(n)$ rounds, we can brute-force an optimal solution, and hence trivially find a $1$-approximation, regardless of the input size.
  \item If we can use only $O(\log^* n)$ rounds, we can use the following idea: find a sufficiently-sparse ruling set that divides the cycle into segments, so that the length of each segment is at least $A$ and at most $B$ for some appropriately-chosen constants $A$ and $B$. Then we resort to the notion of flexibility: we put some fixed ``flexible output sequence'' $x$ at segment boundaries. Then we can follow a cheap closed walk from $x$ to $x$ in the de Bruijn graph $\Gflex$ to fill the gap between two such markers. There are two cases:
  \begin{enumerate}
    \item There are two closed walks from $x$ to $x$ in $\Gflex$ that have coprime lengths and that have average cost exactly $\bflex$. Then by using a combination of such walks we can fill in gaps so that the total cost of the solution is exactly $\bflex n$, and hence we can achieve an approximation ratio of $\alpha = \bflex/\bopt$.
    \item Otherwise we will occasionally use slightly more expensive fragments, but we can still achieve a total cost of $\bflex n + o(n)$, and hence approximation ratio $\alpha = \bflex/\bopt + \epsilon$ for any constant $\epsilon > 0$.
  \end{enumerate}
  \item If we further restrict the running time to $O(1)$ rounds, but allow the use of randomness, we can use a faster but sloppier approach for dividing the cycle into segments by choosing a random ruling set. The key difference is that now we can no longer guarantee that all segments have length at most some constant $B$, yet we need to be able to solve the problem in constant time. This is where we resort to self-loops in $\Ggap$: in those (rare) segments that are too long, we simply fill in the middle part of the segment by following a self-loop, that is, we produce in those areas a constant solution. For example, in the maximum independent set problem this means that we may occasionally have long stretches of $0$s in our output. By choosing the parameters appropriately, we can achieve a total cost of $\bgap n + o(n)$, and hence an approximation ratio of $\alpha = \bgap/\bopt + \epsilon$.
  \item Finally, if we must use an $O(1)$-round deterministic algorithm, we will produce a constant output that corresponds to following the cheapest self-loop of $\Gconst$ repeatedly, achieving a total cost of $\bconst n$ and hence approximation ratio of $\alpha = \bconst/\bopt$.
\end{enumerate}

Then we need to show that the above strategy is optimal: there are no other complexity classes, and for each class the above approximation ratios are the best possible. To give some spirit of the lower bounds, we for example reason as follows. Assume that we have an algorithm that runs in $o(n)$ rounds. Then if the algorithm succeeds in producing a valid solution for every instance, it cannot use output values outside $\Gflex$. Hence the output of the algorithm forms a closed walk in $\Gflex$. But such walks have an average cost at least $\bflex$, while the optimum might have an average cost of $\bopt$, and it follows that the algorithm cannot find an approximation better than $\alpha = \bflex/\bopt$.

The lower bounds get more technical in the $O(\log^* n)$ region. There we use Naor's randomized version
\cite{naor-1991-a-lower-bound-on-probabilistic-algorithms-for} of Linial's lower
bound \cite{linial-1992-locality-in-distributed-graph-algorithms} as well as the
Ramsey-theoretic lower bound technique familiar from Czygrinow, Hanckowiak, and
Wawrzyniak \cite{czygrinow-hanckowiak-wawrzyniak-2008-fast-distributed}. We
refer to \cref{sec:alg} for the details.

\subsection{Full Classification}

\Cref{table:classes-to-params-sum,table:classes-to-params-minmax} summarize how we can determine the exact complexity class for any given problem $\Pi$ and any given approximation ratio $\alpha$. The classification is complete; there are matching upper and lower bounds for each case. \Cref{table:classification-ex} shows what is obtained from applying our method to the previous examples from \cref{ssec:intro-examples}.

\begin{table}
  \centering 
  \caption{Scheme for inferring the complexity class from the problem parameters
  of sum problems.
  To determine the optimal complexity for a problem and a given $\alpha$, one
  scans the \enquote{sufficient condition} column until a line is matched.
  Notice that the bounds given are tight in the sense that, if a condition on
  the right is satisfied and none of those above it apply, then the complexity
  on the left is also a lower bound for the corresponding approximation problem.
  For example, if we have a min $\sum$ problem where $\alpha\bopt \ge \bgap$ and
  $\dgap$ is false but $\alpha\bopt < \bconst$ holds, then there is an
  $\Omega(\log^* n)$ lower bound for \detlcl.}
  \begin{tabular}{llll}
    \toprule
    \multicolumn{2}{c}{Complexity} & \multicolumn{2}{c}{Sufficient condition} \\
    \cmidrule(r){1-2} \cmidrule(r){3-4}
    det.~\local & rand.~\local & $\min \sum$ & $\max \sum$\\
    \midrule
    $O(1)$ & $O(1)$ 
    & $\alpha\bopt \geq \bconst$ & $\frac{1}{\alpha}\bopt \leq \bconst$ \\ 
    \midrule
    $\Theta(\log^{*} n)$ & $O(1)$ 
    & $\alpha\bopt \geq \bgap$ and $\dgap$ is false 
    & $\frac{1}{\alpha}\bopt \leq \bgap$ and $\dgap$ is false \\
    & & $\alpha\bopt > \bgap$ and $\dgap$ is true 
    & $\frac{1}{\alpha}\bopt < \bgap$ and $\dgap$ is true \\
    \midrule
    $\Theta(\log^{*} n)$ & $\Theta(\log^{*} n)$ 
    & $\alpha\bopt \geq \bflex$ and $\dflex$ is false 
    & $\frac{1}{\alpha}\bopt \leq \bflex$ and $\dflex$ is false\\
    & & $\alpha\bopt > \bflex$ and $\dflex$ is true 
    & $\frac{1}{\alpha}\bopt < \bflex$ and $\dflex$ is true\\
    \midrule
    $\Theta(n)$ & $\Theta(n)$ 
    & $\alpha\bopt < \bflex$ 
    & $\frac{1}{\alpha}\bopt > \bflex$\\
    \midrule
    unsolvable & unsolvable & $\bflex = \bot$ & $\bflex = \bot$ \\
    \bottomrule
  \end{tabular}
  \label{table:classes-to-params-sum}
\end{table}

\begin{table}
  \centering 
  \caption{Scheme for deriving the complexity class from the problem parameters
  of min-max and max-min problems. 
  The same remarks regarding determining the complexity of a problem as in
  \cref{table:classes-to-params-sum} apply.
  Notice that, unlike the case of sum problems, there is not the possibility of
  a deterministic $\Theta(\log^* n)$ and randomized $O(1)$ complexity
  class here.}
  \begin{tabular}{llll}
    \toprule
    \multicolumn{2}{c}{Complexity} & \multicolumn{2}{c}{Sufficient condition} \\
    \cmidrule(r){1-2} \cmidrule(r){3-4}
    det.~\local & rand.~\local & $\min \max$ & $\max \min$\\
    \midrule
    $O(1)$ & $O(1)$ 
    & $\alpha\bopt \geq \bconst$ & $\frac{1}{\alpha}\bopt \leq \bconst$ \\ 
    \midrule
    $\Theta(\log^{*} n)$ & $\Theta(\log^{*} n)$ 
    & $\alpha\bopt \geq \bcoprime$ 
    & $\frac{1}{\alpha}\bopt \leq \bcoprime$\\
    \midrule
    $\Theta(n)$ & $\Theta(n)$ 
    & $\alpha\bopt < \bcoprime$ 
    & $\frac{1}{\alpha}\bopt > \bcoprime$\\
    \midrule
    unsolvable & unsolvable & $\bcoprime = \bot$ & $\bcoprime = \bot$ \\
    \bottomrule
  \end{tabular}
  \label{table:classes-to-params-minmax}
\end{table}

\begin{table}
  \centering 
  \caption{Complexity of computing an $\alpha$-approximation in the
  deterministic and randomized \local models for each of our examples from
  \cref{ssec:intro-examples}.}
  \begin{tabular}{lllll}
    \toprule
    &&& \multicolumn{2}{c}{Complexity} \\
    \cmidrule(r){4-5}
    Example & Range of $\alpha$ & Strategy & det.~\local & rand.~\local \\
    \midrule
    \Cref{ex:max-ind-set}: & $1$ & Optimal & $\Theta(n)$ & $\Theta(n)$ \\ 
    max~independent set & $(1,\infty)$ & Constant fragment & $\Theta(\log^{*} n)$ & $O(1)$ \\
    \midrule
    \Cref{ex:min-dom-set}: & $1$ & Optimal & $\Theta(n)$ & $\Theta(n)$ \\ 
    min.~dominating set  & $(1,3)$ & Constant fragment & $\Theta(\log^{*} n)$ & $O(1)$ \\ 
    & $[3,\infty)$ & Constant solution & $O(1)$ & $O(1)$ \\
    \midrule
    \Cref{ex:min-ver-col}: & $[1,\tfrac{3}{2})$ & Optimal & $\Theta(n)$ & $\Theta(n)$ \\ 
    min.~vertex coloring & $[\tfrac{3}{2},\infty)$ & Flexible & $\Theta(\log^{*} n)$ & $\Theta(\log^{*} n)$ \\
    \midrule
    \Cref{ex:max-dom-par}: & $[1,\tfrac{3}{2})$ & Optimal & $\Theta(n)$ & $\Theta(n)$ \\ 
    max.~domatic partition & $[\tfrac{3}{2},3)$ & Flexible & $\Theta(\log^{*} n)$ & $\Theta(\log^{*} n)$ \\ 
    & $[3,\infty)$ & Constant solution & $O(1)$ & $O(1)$ \\
    \midrule
    \Cref{ex:sloppy-col}: & $[1,2)$ & Optimal & $\Theta(n)$ & $\Theta(n)$ \\ 
    sloppy coloring  & $[2,3]$ & Flexible & $\Theta(\log^{*} n)$ & $\Theta(\log^{*} n)$ \\ 
    & $(3,100)$ & Constant fragment & $\Theta(\log^{*} n)$ & $O(1)$ \\ 
    & $[100,\infty)$ & Constant solution & $O(1)$ & $O(1)$ \\
    \bottomrule
  \end{tabular}
  \label{table:classification-ex}
\end{table}

\subsection{Future Directions and Open Questions}

Our results are easy to adapt to \emph{weaker} models: the results generalize essentially verbatim also to the \congest model \cite{peleg-2000-distributed-computing-a-locality-sensitive}, and they are easy to adapt to the port-numbering model, where we essentially lose the deterministic $\Theta(\log^*n)$ class. However, generalizing the results to \emph{stronger} classes leads to interesting open questions. One obvious candidate is the \qlocal model; however, whether symmetry-breaking problems require $\Omega(\log^* n)$ rounds in \qlocal is a major open question \cite{akbari-coiteux-roy-etal-2025-online-locality-meets}, and until this is resolved, we cannot fully characterize local optimization problems in the \qlocal model, either.

An orthogonal direction is generalizations of our questions to more diverse families of inputs. The first steps would be to generalize our results from directed cycles to undirected cycles, directed paths, and rooted trees. Already the case of undirected cycles leads to new technical challenges: constant-round deterministic algorithms for optimization problems in undirected cycles may achieve non-trivial approximations (consider e.g.\ the task of orienting the edges, and assign a higher cost to sink nodes). We believe that by introducing notions analogous to \emph{mirror-flexibility} from \cite{chang-studeny-suomela-2023-distributed-graph-problems} to our setting, it is also possible to eventually classify all these cases.

We have not considered inputs, and as soon as we have inputs, the classification will be \PSPACE-hard \cite{balliu-brandt-etal-2019-the-distributed-complexity-of}. Whether it is nevertheless decidable for local optimization problems with inputs is yet another open question for future work.

\section{Preliminaries}

We denote the set of positive integers by $\N^+$ and $\N^+ \cup \{ 0 \}$ by
$\N_0^+$.
Similarly, $\R_0^+$ is the set of non-negative reals.
We also write $[n] = \{ 1, \dots, n \}$ for the set of the first $n$ elements in
$\N^+$.

Moving on to graphs, we consider only cycles.
Without restriction, let us name the nodes of a directed cycle of length $n$
such that $V = {v_{0},v_{1},\ldots,v_{n-1}}$ and $(v_{i},v_{(i+1)\bmod n}) \in E$
for $i \in \{ 0, \dots, n-1 \}$. 
The \emph{neighborhood} of radius $r \in \N_0^+$ of a node is a function $N_{r}: V \to
V^{r+1}$ such that 
\[
  N_{r}(v_i) = (v_{i},v_{(i+1)\bmod n},\ldots,v_{(i+r)\bmod n}).
\]

\subsection{Locally Checkable Optimization Problems}
\label{sec:opt-lcl}

A \emph{locally checkable optimization problem} (\emph{opt LCL}) $\Pi$ on
directed cycles is a tuple $\Pi = (\Gamma, r, c, \aggr, \obj)$
where:
\begin{itemize}
  \item $\Gamma$ is a finite alphabet.
  \item $r \in \N_0^+$ is the (constant) \emph{radius}.
  \item $c: \Gamma^{r+1} \to \R_0^+ \cup \{\bot\}$ assigns every
  neighborhood a certain \emph{cost}.
  \item $\aggr \in \{\sum, \min, \max\}$ is an \emph{aggregation function}.
  \item $\obj \in \{\min, \max\}$ is an \emph{objective function}.
\end{itemize}
The label $\bot$ is used to indicate neighborhoods that are not valid. 
In the following, such neighborhoods and their cost can be immediately
disregarded. 
In order to distinguish different types of opt LCLs we use the shorthands like $\obj$-$\aggr$ (opt) LCL (e.g., min-sum LCL for an opt LCL where $\obj = \min$
and $\aggr = \sum$).

A \emph{solution} $s$ on a graph $G = (V,E)$ with at least one node is a
function $s: V \to \Gamma$. 
For the neighborhood $N_r(v_{i})$ of a node $v_{i}$, we define
\[
  s(N_r(v_{i})) = (s(v_{i}), s(v_{(i+1)\bmod n}), \ldots, s(v_{(i+r)\bmod n})).
\]
We let $S_{n}$ be the solution space containing all solutions for a problem on a
graph with $n$ nodes.
A solution $s$ to an opt LCL $\Pi = (\Gamma,r,c,\aggr,\obj)$ on a graph
$G=(V,E)$ is said to be \emph{valid} if and only if $c(s(N_r(v))) \neq \bot$
holds for every $v \in V$.
 
The \emph{optimal cost} of a problem $\Pi = (\Gamma,r,c,\aggr,\obj)$ with
$\abs{\Gamma}^{n}$ solutions $S_{n}$ on a directed cycle $G = (V,E)$ of length
$n$ is a function 
\[
  \opt(n) = \obj_{s \in S_{n}} \aggr_{v \in V} c(s(N_r(v))).
\]
Note that indeed $\opt$ solely depends on $n$ as for a given $n$ we know $G$.
A solution $s$ is said to be \emph{optimal} for some $n$ if and only if
\[
  \aggr_{v \in V} c(s(N_r(v))) = \opt(n).
\]

For any opt LCL $\Pi$ and a given $n$ with $\obj = \min$, a value $v$ with (at
least one) associated solution $S_v$ is an \emph{$\alpha$-approximation} if
the constant $\alpha \ge 1$ and $\alpha\cdot\opt(n) \geq v$. 
If $\obj = \max$, the condition is that $\frac{1}{\alpha}\opt(n) \leq v$. 
An algorithm is said to \emph{find an $\alpha$-approximation for $\Pi$} if it
finds an $\alpha$-approximation for any $n$.

\subsection{\boldmath The \local Model}
\label{sec:local-model}

We assume the usual distributed \local model of computation due to Linial
\cite{linial-1992-locality-in-distributed-graph-algorithms}.
The input graph $G$ is perceived as defining a distributed network where nodes
are computers and edges are communication links.
Each node is assigned a unique identifier from the set $[n^c]$, where $c > 1$
is some constant of our choosing.
Computation proceeds in rounds, where in each round a computer exchanges
messages of unbounded size with its neighbors and executes a procedure where it
decides whether to commit to an output or not and what to send on the next
round of communication.
We disregard time and space complexity of local computation and care only about
the number of communication rounds required to solve the problem at hand.
In this paper, our upper bounds give computable local procedures at every node
whereas for the lower bounds we may as well assume an arbitrary procedure.

Whenever we address the \randlcl model, we define an algorithm to produce an $\alpha$-approximation if the output is with high probability valid in terms of the problem and additionally approximates the optimal solution with the factor $\alpha$. Meaning, that only with probability $\leq \frac{1}{n}$ the result is a labeling that either is not valid or exceeds the allowed approximation.

\section{Defining Problem Parameters}\label{sec:problem-param}

In this section, we define the seven parameters $\bopt$, $\bflex$, $\bcoprime$, $\dflex$,
$\bgap$, $\dgap$, and $\bconst$ that fully capture the complexity of producing
approximations of an opt LCL $\Pi$.
Before we can do this in \cref{def:g-a,def:b-d-parameters,def:bcoprime}, we
first introduce some technical tools which we also require later in
\cref{sec:efficient}.

The main technical tool that we employ to analyze an opt LCL $\Pi$ is the
\emph{de Bruijn graph} associated with it.
The values of the problem parameters for $\Pi$ all come from characteristics of
certain walks in the de Bruijn graph associated with it.

\begin{definition}[De Bruijn Graph]
  Given an opt LCL $\Pi=(\Gamma, r, c', \aggr, \obj)$, its \emph{de Bruijn
  graph} is the digraph $G=(V,A)$ where:
  \begin{itemize}
    \item The nodes in $V$ are $(r+1)$-tuples $(s_1,\dots,s_{r+1}) \in
    \Gamma^{r+1}$.
    \item We draw an edge from $(s_1, \dots , s_{r+1})$ to $(t_1, \dots ,
    t_{r+1})$ if $t_i = s_{i+1}$ holds for every $i \in [r]$.
  \end{itemize}   
  Moreover we equip this graph with the cost function $c: V \to R(c')$, where $R(c')$ is the range of $c'$. The function $c$ attributes to a particular neighborhood $N \in \Gamma^{r+1}$ its cost $c'(N)$. We denote by $\gamma = |\Gamma|^{r+1}$ the number of nodes in $G$.
\end{definition}

\Cref{example_1_debruijn,example_2_debruijn,example_3_debruijn,example_4_debruijn,example_5_debruijn}
show the de Bruijn graphs associated with the problems from
\cref{ssec:intro-examples}.

\begin{figure}
  \begin{subfigure}{0.49\textwidth}
    \centering
    \includegraphics{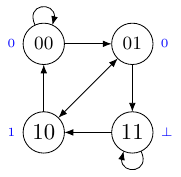}
    \caption{The de Bruijn digraph of \cref{ex:max-ind-set}.} 
    \label{example_1_debruijn}
  \end{subfigure}
  \hfill
  \begin{subfigure}{0.49\textwidth}
    \centering
    \includegraphics{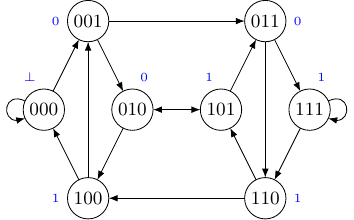}
    \caption{The de Bruijn digraph of \cref{ex:min-dom-set}.} 
    \label{example_2_debruijn}
  \end{subfigure}
  \vspace{1em}
  \begin{subfigure}{\textwidth}
    \centering
    \includegraphics{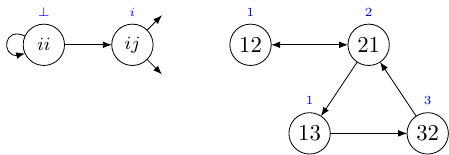}
    \caption{The de Bruijn digraph of \cref{ex:min-ver-col}. On the left-hand side the general structure of the graph. On the right-hand side the subgraph relevant to calculate the values in \cref{table:strategies-ex}.} 
    \label{example_3_debruijn}
  \end{subfigure}
  \vspace{1em}
  \begin{subfigure}{\textwidth}
    \centering
    \includegraphics[width=\textwidth]{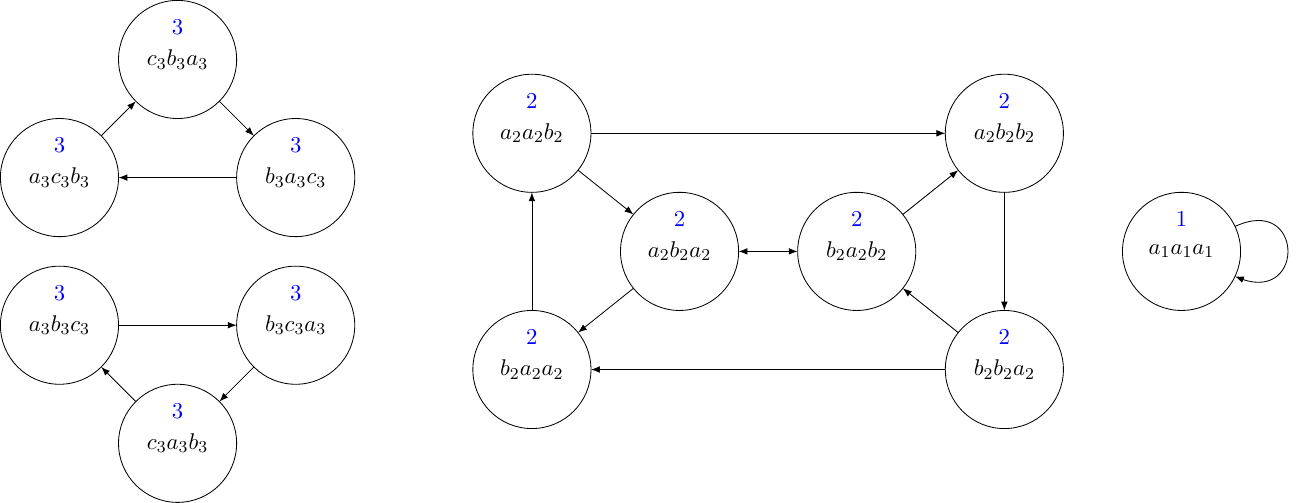}
    \caption{The de Bruijn digraph of \cref{ex:max-dom-par}.} 
    \label{example_4_debruijn}
  \end{subfigure}
  \vspace{1em}
  \begin{subfigure}{\textwidth}
    \centering
    \includegraphics{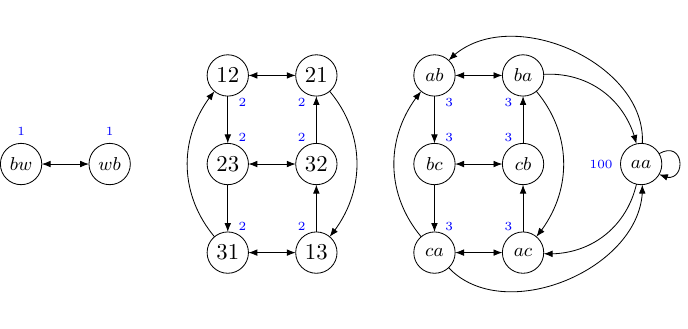}
    \caption{The de Bruijn digraph of \cref{ex:sloppy-col}.} 
    \label{example_5_debruijn}
  \end{subfigure}
  \caption{De Bruijn digraphs for
  \cref{ex:max-ind-set,ex:min-dom-set,ex:min-ver-col,ex:max-dom-par,ex:sloppy-col}.
  In blue, the cost of each neighborhood.
  Nodes with cost $\bot$ can be considered as non-existing in the rest of the paper.}
\end{figure}

Next we look at walks in the de Bruijn graph associated with $\Pi$ as well as
the total cost incurred by following such walks.

\begin{definition}[$(u,v)$-walk]
  A sequence of nodes $W = w_0, \dots, w_k$ with $\forall i, \, w_i \in V$ is a
  \emph{$(u,v)$-walk} if $w_0 = u$, $v=w_k$ and $\forall i, (w_i,w_{i+1})$ is an
  arc of $G$. When $u=v$, we refer to $W$ as a closed walk, or, in particular,
  as a closed $v$ walk.
\end{definition}

\begin{definition}[Cost of a walk]
  Let $W = w_0 \dots w_k$ be a walk with $k \ge 1$. If there is some $i \ge 1$ for which
  $c(w_i)=\bot$, then the \emph{cost} of $W$ is $c(W)=\bot$. 
  Otherwise, if $\aggr = \sum$, $c(W)=\frac{1}{k}\sum_{i=1}^k c(w_i)$ and $c(W)=\aggr_{i \in [1, \dots, k] } c(w_i)$ else. 
\end{definition}

Note that we ignore the value of $c(w_0)$ systematically to avoid counting some element twice when $W$ is a closed walk.

Finally, we define \emph{flexible nodes} and \emph{components}, which are relevant for flexible strategies.

\begin{definition}[Flexible nodes and components]\label{def:flex-node}
  A node $v$ of $G$ is \emph{flexible} if there exists $K \in \mathbb{N}$ such
  that $\forall k \ge K$, there exists a closed $v$ walk of length exactly $k$.
  We say in particular that $v$ is \emph{$K$-flexible}. 

  A strongly connected component $S \subset{V}$ of $G$ is a \emph{flexible
  component} if there is a node $v \in S$ such that $v$ is flexible. 
\end{definition}

We are now ready to define the seven parameters $\bopt$, $\bflex$, $\bcoprime$, $\dflex$,
$\bgap$, $\dgap$, and $\bconst$ that capture the key properties of an opt LCL
$\Pi$.
Recall $\opt(n)$ denotes the value of an optimal solution for a cycle of $n$
nodes.
Basically, we relate each of the $\beta_a$ values to the optimal solution in a
certain subclass of the solution space.
These subclasses are implicitly defined by taking appropriate subgraphs of the
de Bruijn graph associated with $\Pi$.
Meanwhile, $\dflex$ and $\dgap$ are boolean values that depend on certain
specific properties of $\Pi$.

\begin{definition}[$\Gopt$, $\Gflex$, $\Ggap$, $\Gconst$]
  \label{def:g-a}
  Let $G$ be the de Bruijn graph associated with an opt LCL.
  We define a set of (possibly empty) subgraphs $\mathcal{G} = \{\Gopt, \Gflex,
  \Ggap, \Gconst\}$ for $\Pi$, containing:
  \begin{enumerate}
    \item $\Gopt = G$.
    \item $\Gflex \subseteq \Gopt$ containing only the flexible components of $\Gopt$.
    \item $\Ggap \subseteq \Gflex$ containing only components of $\Gflex$ that contain a self-loop.
    \item $\Gconst \subseteq \Ggap$ containing only nodes of $\Ggap$ that have a self-loop.
  \end{enumerate}
\end{definition} 

\begin{definition}[$\bopt$, $\bflex$, $\bgap$, $\bconst$, $\dflex$, $\dgap$]
  \label{def:b-d-parameters}
  Let $\Pi=(\Gamma, r, c', \aggr, \obj)$ be an opt LCL.
  For each $G_a \in \mathcal{G}$, we define $\mathcal{C}(G_a)$ as the set of
  cycles in $G_a$ and the real-valued parameter 
  \[
    \beta_a = \obj_{\substack{C \in \mathcal{C}(G_a) \\ c(C) \neq \bot}} c(C). 
  \]
  Note this includes self-loops, which are cycles of length one.
  If $\mathcal{C}(G_a)$ is empty, then $\beta_a = \bot$.

  Meanwhile, the parameters $\dflex$ and $\dgap$ are Booleans:
  \begin{enumerate}
    \item $\dflex$ is false if there are two closed walks $W_1, W_2$ in $\Gflex$ of
    coprime length, sharing a node, so that $c(W_1) = c(W_2) = \bflex$; otherwise, we set
    $\dflex$ to true. 
    \item $\dgap$ is false if $\bgap = \bconst$, and true otherwise.
  \end{enumerate}
\end{definition}

As an example, for $\min\sum$ problems we can connect $\beta_a$ with concrete
solutions to $\Pi$ as follows:
\begin{enumerate}
  \item $\bopt$ is the best possible relative value that can be achieved
  infinitely often (i.e., asymptotically) by the optimum over all cycle lengths.
  \item $\bflex$ is the (again, asymptotically) worst such value.
  \item $\bgap$ is the best possible value for solutions that admit constant
  neighborhoods, that is, where a single, fixed label appears.
  \item Finally, $\bconst$ is the best possible (relative) value for constant
  solutions, that is, where the same fixed label is used \emph{everywhere}.
\end{enumerate}

Note that, for example when summing, if $k$ is the length of the cycle providing
$\bopt$, then for any $n$ we have $n \bmod k = 0$ implies $\frac{\opt(k)}{k} =
\frac{\opt(n)}{n} = \bopt$.
In addition, for large $n$, we have $\bopt\cdot n \leq \opt(n) \leq \bflex n +
O(1)$. It is possible that $\dflex$ is false and $\bflex = \bopt$.

\begin{definition}[$\bcoprime$]
  \label{def:bcoprime}
  To define $\bcoprime$, we let
  \[
    V_\le(r) = \left\{ v \in V \mid c(v) \neq \bot \land c(v) \le r \right\}
  \]
  and
  \[
    V_\ge(r) = \left\{ v \in V \mid c(v) \neq \bot \land c(v) \ge r \right\}
  \]
  and let $G_\le(r)$ and $G_\ge(r)$ be the subgraphs of the de Bruijn graph $G$
  induced by $V_\le(r)$ and $V_\ge(r)$, respectively.
  We then define $\bcoprime$ for min-max and max-min problems as follows:
  \begin{itemize}
    \item For min-max problems, $\bcoprime$ is the smallest $r \in \R_0^+$ for
    which the graph $G_{\le}(r)$ contains two walks of coprime lengths that
    have a node in common.
    \item For max-min problems, $\bcoprime$ is the largest $r \in \R_0^+$ for
    which the graph $G_{\ge}(r)$ contains two walks of coprime lengths that
    have a node in common.
  \end{itemize}
\end{definition}

\section{Problem Parameters Are Computable Efficiently}\label{sec:efficient}

In this section, denote by $G$ the de Bruijn digraph of the given opt LCL. Recall that $\gamma$ is its number of nodes. We will assume that $G$ is pruned of its vertices whose value is $\bot$. In practice, it may mean that $G$ may have multiple strongly connected components, no self-loop, etc. Next we present the necessary result to prove that all seven problem parameters can be computed efficiently. Note that $\bgap$, $\dflex$ and $\dgap$ are only relevant to sum opt LCLs, so we do not need to compute them otherwise.

Let $\mathcal{C}_{\operatorname{opt}}$ be the set of closed walks that are optimal regarding
the objective function. In the event that $\mathcal{C}_{\operatorname{opt}} = \emptyset$, it is
impossible to form a closed walk in $G$, and hence it is impossible to solve
$\Pi$. Hence we may assume $\mathcal{C}_{\operatorname{opt}} \ne \emptyset$.

\begin{lemma}\label{lem:wopt-is-poly}
  If $\mathcal{C}_{\operatorname{opt}} \ne \emptyset$, there exists a closed walk in $\mathcal{C}_{\operatorname{opt}}$ of length at most $\gamma$.
\end{lemma}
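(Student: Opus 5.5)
The plan is to take an arbitrary optimal closed walk and repeatedly shorten it by splitting at a repeated node until it becomes a simple cycle. A simple cycle in $G$ has length at most $\gamma$, so this suffices. Throughout, $G$ is pruned of $\bot$-nodes, so every closed walk has a real-valued cost. Fix $W = w_0,\dots,w_k \in \mathcal{C}_{\operatorname{opt}}$ of minimum length among optimal walks, and suppose for contradiction that $k > \gamma$.

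\textbf{Splitting.} By pigeonhole, among $w_1,\dots,w_k$ some node repeats, say $w_i = w_j$ with $1 \le i < j \le k$. Then $W$ decomposes into two closed walks:
\begin{itemize}
  \item $W_1 = w_i, w_{i+1}, \dots, w_j$ of length $j-i$;
  \item $W_2$, obtained by going from $w_j$ to $w_k = w_0$, continuing to $w_i$, and rotating so it is a closed walk, of length $k-(j-i)$.
\end{itemize}
Both lengths are positive and strictly less than $k$. Since the cost ignores the starting node of a closed walk, the multiset of counted costs $\{c(w_1),\dots,c(w_k)\}$ is exactly the disjoint union of the counted costs of $W_1$ and $W_2$. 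Rotation does not change the multiset of counted nodes, so rotating is harmless.

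\textbf{Comparing costs.} For $\aggr \in \{\min,\max\}$, the aggregate of $W$ is the $\aggr$ of the aggregates of $W_1$ and $W_2$, so one of the two pieces attains exactly $c(W)$ and is at least as good as $W$. For $\aggr = \sum$, the average cost satisfies
\[
  c(W) = \tfrac{\ell_1}{k}\,c(W_1) + \tfrac{\ell_2}{k}\,c(W_2),
\]
a convex combination, so $\min(c(W_1),c(W_2)) \le c(W) \le \max(c(W_1),c(W_2))$. Whichever objective is used, one piece is therefore at least as good as $W$ and, by optimality of $W$, also optimal. It is strictly shorter, contradicting the minimality of $k$. Hence $k \le \gamma$.

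\textbf{Main obstacle.} The only delicate step is the bookkeeping of which node's cost is ignored, namely that $c(w_0)$ is dropped. We must check that the two sub-walks partition the counted indices $\{1,\dots,k\}$, i.e.\ $W_1$ counts indices $i+1,\dots,j$ and $W_2$ counts the rest. This holds precisely because each closed walk drops its own repeated endpoint once. We also need that "optimal" refers to closed walks of every length, as in \cref{def:b-d-parameters}, so that the shorter piece is a legitimate competitor.
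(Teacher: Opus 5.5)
Your proof is correct and follows essentially the same approach as the paper: split an overlong optimal closed walk at a repeated node into two strictly shorter closed walks, and use the fact that the whole walk's cost is a weighted average (or the $\min$/$\max$) of the pieces' costs, so one piece remains optimal. The differences are cosmetic. You take a minimal-length counterexample and make a single split over the counted indices $1,\dots,k$, where the paper iterates and case-splits on whether the base node or another node repeats. Your claim that \emph{one} piece stays optimal is also slightly more robust than the paper's assertion that both pieces have cost exactly $c(w)$.
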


\begin{proof} 
Let $w \in \mathcal{C}_{\operatorname{opt}}$ be a closed $v$ walk of length at least $\gamma+1$. We prove that we can find a walk $w' \in \mathcal{C}_{\operatorname{opt}}$ such that $|w'|<|w|$. We repeat this argument on $w'$ until we eventually find a walk of length at most $\gamma$.

As $|w|\ge \gamma+1$, by the pigeonhole principle, either $v$ is visited three times or there exist some $v^* \ne v$ visited twice.
In the first case, we split $w$ in two closed $v$ walks at $v$, $L_1$ and $L_2$.
Observe that both $L_1$ and $L_2$ have their length in $[1, |w|-1]$, and both
have their cost equal to $c(w)$, as otherwise $w \notin \mathcal{C}_{\operatorname{opt}}$:
If $c(L_1) < c(w)$ (resp., $c(L_1) > c(w)$ in the case of maximization), $w$ is
not optimal anymore. 
Moreover, if $c(L_1) > c(w)$ (resp., $c(L_1) < c(w)$ for maximization), then
either $c(w) > c(w)$ (resp., $c(w) < c(w)$) for $\aggr = \sum$ or $c(L_2)
< c(w)$ (resp., $c(L_2) > c(w)$), which is again a contradiction.
Hence, both of them belong to $\mathcal{C}_{\operatorname{opt}}$.   

If there is $v^* \ne v$ visited twice, the proof is quite similar. Split $w$ into three walks at $v^*$: $L_1 = v, \dots, v^*$, $L_2 = v^*, \dots, v^*$ and $L_3 = v^*, \dots, v$.
Now consider the two closed walks $L_1+L_3$ and $L_2$. Their lengths are in $[1, |w|-1]$ and they both must have the same cost as $w$; hence they belong to $\mathcal{C}_{\operatorname{opt}}$.
\end{proof}

The definition of $\mathcal{C}_{\operatorname{opt}}$ suggests that the optimal elements of the de Bruijn graph are closed walks and not only simple cycles. In the next lemma we prove that we can actually think only about cycles:

\begin{lemma}\label{lem:cycle-vs-walk}
If there is a closed walk $W$ of length at most $\gamma$ with cost $\beta$ in the de Bruijn graph then there is a cycle of length at most $\gamma$ with cost at least as good as $\beta$ in terms of the optimization function. 
\end{lemma}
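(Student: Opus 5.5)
We prove the statement by strong induction on the length $k$ of the closed walk $W = w_0, \dots, w_k$ with $w_0 = w_k$. Recall that the cost of a closed walk only counts the nodes $w_1, \dots, w_k$. So a closed walk is exactly a cyclic sequence of $k$ nodes, and its cost is either the average of the $k$ node costs (when $\aggr = \sum$) or their maximum/minimum (otherwise). If $w_1, \dots, w_k$ are pairwise distinct, then $W$ is already a cycle (a self-loop when $k=1$), its length is at most $\gamma$, and we are done.

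Otherwise some node repeats, say $w_i = w_j$ with $1 \le i < j \le k$. We then split $W$ at this repeated node into two closed walks, exactly as in the proof of \cref{lem:wopt-is-poly}. The first is $L_2 = w_i, \dots, w_j$, of length $j-i$. The second is $L_1 + L_3 = w_j, \dots, w_k = w_0, w_1, \dots, w_i$, of length $k-(j-i)$. Both lengths lie in $[1, k-1]$. Moreover, the multiset of counted nodes of $W$ (namely $w_1, \dots, w_k$) is the disjoint union of the counted nodes of the two pieces. We also note that every node of $W$ has cost $\neq \bot$, since otherwise $c(W) = \bot$ and there is nothing to prove (or $G$ is pruned as assumed in this section).

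The key step is to show that one of the two pieces has cost at least as good as $\beta = c(W)$. We split into cases by the aggregation function.
\begin{itemize}
\item When $\aggr = \sum$, we have $c(W) = \frac{(j-i)\,c(L_2) + (k-j+i)\,c(L_1+L_3)}{k}$. This is a convex combination of the two costs, so $\min$ of the two is at most $c(W)$ and $\max$ of the two is at least $c(W)$. Hence the better piece with respect to $\obj$ is at least as good as $\beta$.
\item When $\aggr \in \{\max,\min\}$, the cost of each piece is the max/min over a subset of the nodes of $W$. So each piece's cost is at most $c(W)$ for $\aggr=\max$ and at least $c(W)$ for $\aggr=\min$. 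For the meaningful combinations min-max and max-min, this means each piece is at least as good as $\beta$.
\end{itemize}
The combinations max-max and min-min are not among the considered problem types. If they needed handling, the same argument still works by choosing the piece containing the extremal node, because that piece attains exactly $c(W)$.

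Applying the induction hypothesis to the chosen, strictly shorter closed walk yields a cycle of length at most its length, hence at most $\gamma$, whose cost is at least as good as that piece's cost and therefore at least as good as $\beta$. The only mildly delicate point, which I expect to be the main thing to get right, is the bookkeeping of which node is ``ignored'' in the cost. I handle it by viewing closed walks cyclically, so that the counted node multisets add up exactly under the split.
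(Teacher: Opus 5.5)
Your proposal is correct and follows essentially the same route as the paper: split the closed walk at a repeated node into two strictly shorter closed walks, keep the one that is better with respect to $\obj$, and iterate until a cycle remains. Your version simply makes precise the two points the paper calls ``apparent'': the weighted-average identity for $\aggr=\sum$, and the subset argument for the max/min aggregations, together with the bookkeeping showing that the counted nodes of $W$ split exactly between the two pieces.
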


\begin{proof}
If $W$ is a cycle, the proof is trivial. Else, divide $W$ into any two subwalks $W_1$ and $W_2$. If $W_1$ is better than $W_2$ regarding the objective function, proceed this process iteratively with $W_1$, else with $W_2$, until the closed walk remaining is a cycle. It is apparent that $W$ cannot yield a better result on any optimization function than the better of the two of its subwalks, no matter if we look at an average cost or a min / max cost. 
\end{proof}

In order to compute $\bflex$ and $\bcoprime$ it is crucial to identify flexible nodes of the graph. To figure out if a given node is flexible we prove the following lemma:

\begin{lemma}\label{lem:coprime-cycles-equals-flexible-components}
   Node $v$ is flexible if and only if $v$ belongs to two closed walks of coprime lengths. 
\end{lemma}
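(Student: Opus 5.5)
We prove the two directions separately. Throughout, ``closed walks through $v$'' means closed $v$ walks in the sense of the definition above, and lengths count arcs.

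\emph{($\Rightarrow$)} Suppose $v$ is $K$-flexible. By definition there exist closed $v$ walks of length exactly $k$ and $k+1$ for $k = \max(K,1)$. Since $\gcd(k,k+1)=1$, these two walks are closed walks containing $v$ with coprime lengths, and this direction is immediate.

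\emph{($\Leftarrow$)} Suppose $v$ lies on closed walks $W_1$ and $W_2$ of coprime lengths $a$ and $b$. First, rotate each $W_i$ so that it starts and ends at $v$. A rotation of a closed walk is still a closed walk of the same length, and it now passes through $v$ as its endpoint. Concatenating closed $v$ walks yields a closed $v$ walk whose length is the sum of the lengths. Hence for all $x,y \in \N_0^+$ with $(x,y)\neq(0,0)$ there is a closed $v$ walk of length $xa+yb$. By the classical Frobenius/Chicken McNugget theorem for coprime $a,b$, every integer $k \ge (a-1)(b-1)$ is representable as $xa+yb$ with $x,y\ge 0$. For $k\ge 1$ such a representation is automatically not $(0,0)$. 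Taking $K=\max\{(a-1)(b-1),1\}$ shows $v$ is $K$-flexible.

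The only step needing care is the representability claim, which I would include for completeness. The residues $0,b,2b,\dots,(a-1)b$ are distinct modulo $a$ because $\gcd(a,b)=1$. So for any $k$, choose $y\in\{0,\dots,a-1\}$ with $yb\equiv k \pmod a$. Then $k-yb$ is divisible by $a$ and is nonnegative as soon as $k\ge (a-1)b$; set $x=(k-yb)/a$. This gives the weaker but sufficient bound $K=(a-1)b$, so the sharp Frobenius constant is unnecessary. The degenerate case $a=1$ (a self-loop at $v$) is covered as well, since then every $k\ge1$ works. I expect no real obstacle. The main subtlety is just making explicit that ``belongs to a closed walk'' can be converted to ``closed $v$ walk'' by rotation.
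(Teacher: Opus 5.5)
Your proof is correct and follows the paper's argument. The paper also proves ($\Leftarrow$) via the Frobenius bound $(a-1)(b-1)$ for coprime lengths, and ($\Rightarrow$) by picking two coprime lengths $\ge K$, of which your choice $k,k+1$ is a concrete instance; your rotation step and your self-contained residue argument giving the weaker bound $(a-1)b$ are details the paper leaves implicit, and you handle them correctly.
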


\begin{proof}
  Let $A$ and $B$ be two closed $v$ walks with $a = |A|$ and $b = |B|$ being coprime. In that case $K = (a-1)(b-1)$ is the Frobenius number, meaning that all numbers greater than $K$ can be obtained by a positive linear combination of $a$ and $b$, hence $v$ is a $K$-flexible node. 
  
  Now let $v$ be a flexible node. Hence there exist $K>0$ such that for all $k\ge K$, there exists a closed $v$ walk of length exactly $k$. Pick any $a \ge K$ and $b \ge K$ such that $a$ and $b$ are coprime. 
\end{proof}

Figuring the value of $\bflex$ is in principle easy: for each connected component, we can compute the cost of the optimum cycle in that component, and test which components are flexible. Testing the flexibility of a component can be achieved by considering walks of length $O(\gamma)$ \cite{chang-studeny-suomela-2023-distributed-graph-problems}. We then set $\bflex$ accordingly.

The hard part of our work is to decide whether $\dflex$ is false or not. $\dflex$ would be false only if there are two closed walks of coprime length, sharing a vertex and of cost $\bflex$. We adapt the procedure to test flexibility by \cite{chang-studeny-suomela-2023-distributed-graph-problems} to take into account the cost function of the de Bruijn graph in the next two lemmas. 

\begin{definition}
  If $\Pi$ is an $\obj$-$\sum$ opt LCL, then $G_{\bflex} = (V_{\bflex},A_{\bflex})$, where
  \begin{align*}
    V_{\bflex} &:= \left\{\ v \in V \mid v \text{ belongs to a closed walk of cost } \bflex\right\}, \\
    A_{\bflex} &:= \{\ (u,v) \in A \mid \text{$(u,v)$ belongs to a closed walk of cost $\bflex$}\}.
  \end{align*}
\end{definition}

It would be reasonable to think that having very long walks might be good in order to optimize the cost of a solution. We prove in our next lemma that this is not the case: all the walks of $G_\bflex$ have cost exactly $\bflex$, and hence the length of the walk does not matter. This lemma, coupled with the fact that testing flexibility can be done by considering walks of limited length will allow us to efficiently test whether $\dflex$ is false or not.
As $\dflex$ is a parameter defined only for $\obj$-$\sum$ opt LCLs, there is no analogous definition of $G_\bflex$ for $\min$-$\max$ and $\max$-$\min$ opt LCLs. 

\begin{lemma}\label{lem:zero_pot-minsum}
  Let $\Pi$ be an $\obj$-$\sum$ opt LCL. Any closed walk in the graph $G_\bflex$ has cost exactly $\bflex$.
\end{lemma}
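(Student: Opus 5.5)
We argue in the min-sum case; the max-sum case is symmetric, with inequalities reversed. The tool is a potential (reduced-cost) argument, in the spirit of minimum mean cycle duality. Two facts about $\bflex$ come first.

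First, $\bflex$ is the minimum average cost over all closed walks in $\Gflex$, not just over cycles. Any closed walk decomposes into cycles whose lengths sum to the walk length, so its average cost is a weighted average of the cycle averages and is at least $\bflex$.

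Second, closed walks of cost $\bflex$ can be glued. Suppose $W_1$ and $W_2$ are closed walks of cost $\bflex$ sharing a node $u$. Rotate both to start at $u$ and concatenate them. The resulting closed walk has total cost $\bflex(|W_1|+|W_2|)$, hence average exactly $\bflex$. By induction, concatenating several minimum-mean closed walks at common nodes always yields a minimum-mean closed walk.

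The main step is to show that every arc of $A_\bflex$ is tight with respect to a potential. Define $c'(v) = c(v)-\bflex$ on nodes of $\Gflex$. Every closed walk in $\Gflex$ then has nonnegative $c'$-weight, so by the standard Bellman–Ford / Karp argument, applied in each strongly connected component of $\Gflex$, there is a potential $p$ with
\[
  p(v) \le p(u) + c'(v) \quad \text{for every arc } (u,v).
\]
Concretely, let $p(v)$ be the minimum $c'$-weight of a walk from a fixed root to $v$, counting costs of nodes after the first; this is well-defined because there are no negative closed walks. Summing the slack $p(u)+c'(v)-p(v)\ge 0$ around a closed walk $W$ gives exactly $|W|\,(c(W)-\bflex)$. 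Hence a closed walk has cost $\bflex$ if and only if every arc on it has zero slack. Consequently every arc of $A_\bflex$, which by definition lies on some closed walk of cost $\bflex$, has zero slack. Now take any closed walk in $G_\bflex$. All of its arcs are in $A_\bflex$, so all of them have zero slack, and the summation identity gives cost exactly $\bflex$.

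The main obstacle is making sure the potential is defined consistently. A single component of $\Gflex$ works, since minimum-cost walks from a root are finite. Closed walks never leave a strongly connected component, so a potential per component suffices. We must also check that $\bflex$ really is the minimum mean over every component of $\Gflex$, which holds because it is defined as the minimum over all of $\Gflex$.

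An alternative route avoids potentials and uses only gluing. Each arc $e$ of the given closed walk $W$ lies on some optimal closed walk $W_e$. Splice every $W_e$ into $W$ at $e$ and take $|W|$ copies appropriately; this realizes a multiset of optimal closed walks whose total equals $W$ plus optimal pieces. A counting argument then forces $c(W)\le\bflex$, hence $c(W)=\bflex$. The potential argument is cleaner, so we would present that one.
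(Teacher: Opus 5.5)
Your proof is correct, but it takes a different route from the paper.

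\textbf{The paper's route.} It rescales costs to $c(v)/\bflex-1$ and regards $G_\bflex$ as a union of simple closed walks $W_1,W_2,\dots$ of shifted cost zero. It then inducts on this sequence. A closed walk $W$ in $\bigcup_{i\le k+1}W_i$ is padded to a closed walk $W'$ that splits into full traversals of $W_{k+1}$ plus a closed walk in $\bigcup_{i\le k}W_i$, so $W'$ has shifted cost zero. The surplus $W'-W$ splits into closed walks of nonnegative shifted cost, so $W$ cannot have positive shifted cost.

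\textbf{Your route.} Your Bellman--Ford potential replaces this walk surgery by one telescoping identity: the slack summed along a closed walk equals $|W|(c(W)-\bflex)$. This is the standard duality for minimum mean cycles, and it buys three things.
\begin{enumerate}
  \item No ordering of the $W_i$ is needed. The paper's padding step tacitly requires paths inside $\bigcup_{i\le k}W_i$ joining the points where $W$ enters and leaves $W_{k+1}$.
  \item It works verbatim when $\bflex=0$, where the paper's rescaling is undefined.
  \item It characterizes $A_\bflex$ exactly as the tight arcs that lie on cycles of tight arcs. This gives a direct polynomial-time construction of $G_\bflex$ rather than only the existence of the sequence $W_1,W_2,\dots$.
\end{enumerate}
In exchange, the paper's version stays within the elementary walk-splitting style used in the rest of \cref{sec:efficient}.

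\textbf{Two small remarks.} First, your gluing paragraph and the alternative route are never used and can be dropped. Second, say explicitly that the cost-$\bflex$ closed walks defining $V_\bflex$ and $A_\bflex$ are taken inside $\Gflex$, since that is the only place your potential is defined. The paper's proof relies on the same reading. If those walks ranged over all of $G$, the statement could fail: a non-flexible component may contain a strictly cheaper cycle all of whose arcs also lie on closed walks of average cost exactly $\bflex$.
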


\begin{proof} 
  Now let us assume that $\Pi$ is an $\obj$-$\sum$ opt LCL. Call the shifted cost function $c_\bflex\colon V_\bflex \to
  \mathbb{R}^+$ defined as $c_\bflex(v) = \frac{c(v)}{\bflex}-1$. Call the
  cost of a walk $W$ computed using $c_\bflex$ instead of $c$ the \emph{shifted
  cost} of $W$. It is clear that a walk of cost $\bflex$ has shifted cost zero
  (as $k\bflex = \sum_1^k c(a)$ if and only if $0 = \sum_1^k( c(a)/\bflex -1)$ for $\obj$-$\sum$-opt LCLs).
  Moreover, by definition of $\bflex$, all closed walks must have shifted cost at least $0$ if $\obj = \min$ (resp. at most $0$ if $\obj = \max$).
  For simplicity, we can imagine that $G_\bflex$ is constructed by adding one by
  one a finite number of simple closed walks $W_1, W_2, \dots$ of shifted cost
  zero.
  (Note that we do not actually need to design an algorithm to build this
  sequence.)

  We now prove the lemma by induction. Let $W$ be a closed walk of $G_\bflex$. Assume that $W$ exists solely on the graph spanned by $W_1$. It is clear that $W_1$ being a simple closed walk, $W$ consists of one or multiple instances of $W_1$ and hence $c_\bflex(W) = 0$.

  Suppose now that for some $k \in \mathbb{N}$, any closed walk on the graph spanned by $\cup_i^k W_i$ has
  shifted cost exactly zero. We want to prove that any
  closed walk $W$ on $\cup_i^{k+1} W_i$ has shifted cost exactly $0$. We use the
  following notation to represent $W$:  \[a_1^1, \dots, a_{k_1}^1 = b_1^1, \dots,
  b_{\ell_1}^1 = a_1^2, \dots, a_{k_2}^1, \dots,\]  where $a_{k_i}^i = b_1^i$,
  $b_{\ell_i}^i = a_1^{i+1}$ and $b_i^j$ are nodes of $W_{k+1}$, and $a_i^j$ are
  other nodes. The reader can refer to \cref{fig:flexgraph} as a visual aid for the rest of the proof.
  
  Consider a walk $W'$ defined in the following way: it is built following $W$ but each time we reach a frontier node $a_{k_i}^i = b_1^i$, we add a first segment going from $a_{k_i}^i$ to $a_1^{i+1}$ through nodes in $\cup_{j=1}^{k} W_j$ then a second segment going from $a_1^{i+1} = b_{\ell_i}^i$ to $b_1^i$ through nodes of $W_{k+1}$. Then, $W'$ resumes its visit of $W$. Note that it is possible to reorient the way in which $W'$ is travelled so that it can be seen as a visit of $\cup_i^{k+1} W_i$, then when it reaches a frontier node $b_1^i$, it may go through $W_{k+1}$ entirely, then resumes its visit of $\cup_i^{k} W_i$. Then it is clear that $c_\bflex(W') = 0$ as it consists of multiple travels through $W_{k+1}$ and a walk on $\cup_i^k W_i$. Also, we have $W \subseteq W'$. If we had $c_\bflex(W) > 0$ (resp., $c_\bflex(W) < 0$), then in some way $W'$ would be compensating in some part of $W'-W$. However, $W'-W$ consists of a collection of cycles of the form $b_{\ell_i}^i \dots b_1^i$ plus some segments going from $a_{k_i}^i$ to $a_1^{i+1}$ in $\cup_i^k W_{i}$. If $c(W)>0$ (resp., $c(W) < 0$), then one of those cycles would be negative (resp., positive) shifted cost, which is a contradiction with the definition of $\bflex$. Hence it follows that all closed walks on $\cup_i^{k+1} W_i$ have cost exactly $\bflex$.
\end{proof}
\begin{figure}
    \centering
    \begin{subfigure}{.5\textwidth}
  \centering
  \includegraphics[width=\textwidth]{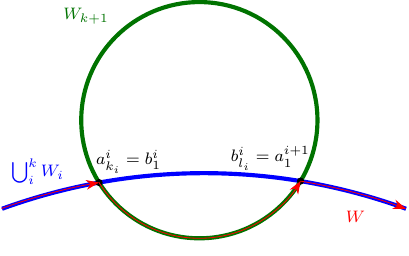}
\end{subfigure}
\begin{subfigure}{.5\textwidth}
  \centering
  \includegraphics[width=\textwidth]{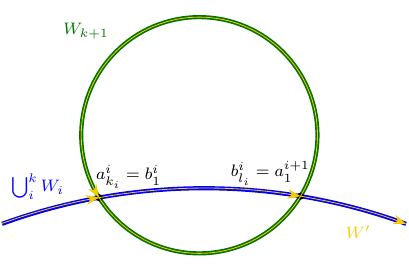}
\end{subfigure}
    \caption{Illustration of the construction of $W'$ using $W$ in \cref{lem:zero_pot-minsum}. Each time $W$ uses some part of $W_{k+1}$, we add at the end of it the unused part of $W_{k+1}$ (the green upper part) and any path going from $a^i_{k_i}$ to $a_1^{i+1}$ in $\cup_i^k W_{i}$. This way it is clear that $W'$ consists of possibly multiple instances of $W_{k+1}$ and some closed walk on $\cup_i^k W_{i}$, making the shifted cost of $W'$ necessarily $0$.}
    \label{fig:flexgraph}
  \end{figure}

As mentioned, for testing if a digraph has flexible nodes it is sufficient to
look at closed walks of lengths at most $2\gamma+1$ as shown in
\cite{chang-studeny-suomela-2023-distributed-graph-problems}. We provide the
rewriting of the proof in \cref{apdix:flex-comprime}. We can directly use this result to prove that in $G_\le (\bcoprime)$ (resp. $G_\ge(\bcoprime)$) there are two closed walks of coprime lengths sharing a vertex of length at most $2\gamma+1$ and cost at most $\bcoprime$ (resp. at least) as all walks of $G_\le(\bcoprime)$ (resp. $G_\ge(\bcoprime)$) must have cost at most (resp. at least) $\bcoprime$.
Thanks to \cref{lem:zero_pot-minsum}, we know that all the walks of $G_\bflex$ have cost exactly $\bflex$, so we can also apply the same result. Then we can test if there are two closed walks of coprime lengths sharing a vertex and having cost $\bflex$ by only looking at walks of length at most $2\gamma+1$; hence deducing efficiently whether $\dflex$ is false or not.

\begin{lemma} \label{cor:wflex-is-poly}
Let $G$ be the de Bruijn graph of an opt-sum LCL. If there are two walks of coprime length sharing a vertex and having cost $\bflex$ in $G$, then there are two such walks of length at most $2\gamma +1$.
\end{lemma}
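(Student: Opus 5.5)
Since any two closed walks of cost $\bflex$ sharing a vertex $v$ lie entirely in $G_\bflex$, the plan is to move to $G_\bflex$. There, by \cref{lem:zero_pot-minsum}, every closed walk has cost exactly $\bflex$. This turns the weighted question into a purely combinatorial one about walk lengths. It then suffices to show: if a vertex $v$ of $G_\bflex$ lies on two closed walks of coprime lengths, then some vertex of $G_\bflex$ lies on two closed walks of coprime lengths, each of length at most $2\gamma+1$. Any closed walk in $G_\bflex$ automatically has cost $\bflex$, so we get the two required short walks, and they share a vertex.

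For the combinatorial statement I would follow the flexibility argument of \cite{chang-studeny-suomela-2023-distributed-graph-problems} (rewritten in \cref{apdix:flex-comprime}). Let $S$ be the strongly connected component of $G_\bflex$ containing $v$, and let $g$ be the gcd of the lengths of all closed walks in $S$. Coprime walks at $v$ force $g=1$. Every closed walk decomposes, by repeatedly cutting at a repeated vertex, into simple cycles, each of length at most $\gamma$. Hence $g$ is also the gcd of the lengths of the simple cycles in $S$, and some finite family $C_1,\dots,C_m$ of simple cycles has lengths with gcd $1$.

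The goal is to get down to two short walks. Fix a simple cycle $C_1$ through a vertex $u$. If its length $\ell_1$ is $1$, then $C_1$ together with any closed walk through $u$ already gives a coprime pair, and one can take the second walk to be a simple cycle of length at most $\gamma$. Otherwise, since $g=1$, some simple cycle $C_j$ has length $\ell_j$ with $p\nmid \ell_j$ for some prime $p\mid \ell_1$. I would build a closed walk through $u$ of the form: path from $u$ to $C_j$, around $C_j$, path back to $u$. Choosing these connecting paths as shortest paths, of length at most $\gamma-1$ each, gives a closed walk whose length can be made at most about $2\gamma+1$ after merging the path and cycle pieces. The standard trick is to compare this walk with the same excursion that skips $C_j$. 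The two lengths differ by exactly $\ell_j$, so one of them is not divisible by $p$.

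The main obstacle is that this only removes one prime of $\ell_1$ at a time, not all of them simultaneously, while keeping the walk length bounded by $2\gamma+1$. I would handle it the way the cited proof does. Run the argument in a shortest-walk form: take a closed walk through $u$ of length at most $2\gamma+1$ whose length has minimal gcd with $\ell_1$, and derive a contradiction if that gcd exceeds $1$, using the excursion-exchange above. I would defer these details to the appendix result rather than redo them, since the only new ingredient needed here is the reduction to $G_\bflex$ via \cref{lem:zero_pot-minsum}.
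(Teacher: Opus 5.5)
Your proposal is essentially the paper's proof: pass to $G_\bflex$, use \cref{lem:zero_pot-minsum} so that every closed walk there has cost exactly $\bflex$, and apply \cref{lem:2gamma-walks-flexibility} to get two short coprime closed walks, which then automatically have cost $\bflex$. Restricting to the strongly connected component of $v$ is a small improvement, since \cref{lem:2gamma-walks-flexibility} assumes strong connectivity; your extra combinatorial sketch is unnecessary and would not work as written (an exchange removing one prime factor of $\ell_1$ can introduce another, so a ``minimal gcd'' walk gives no contradiction), and the appendix does not argue that way either: it only shows, via $k_0=-k_1+k_2+k_3$, that the short closed-walk lengths have $\gcd$ equal to $1$, and then asserts the coprime pair, which is exactly the obstacle you flag, so your argument is as complete as the paper's and no more.
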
 

\begin{proof}
  Apply \cref{lem:2gamma-walks-flexibility} to $G_{\bflex}$ as it is defined in \cref{lem:zero_pot-minsum} and note that those walks exist in $G_{\bflex}$ if and only if they exist in $G$.
\end{proof}

\begin{lemma}\label{lem:minsum-maxsum-are-poly}
  For any opt LCL, if they exist, we can compute the values of $\bopt$, $\bflex$, $\dflex$, $\bcoprime$, $\bgap$, $\dgap$, and $\bconst$ in time polynomial in $|\Gamma|$.
\end{lemma}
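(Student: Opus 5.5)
The plan is to first build the de Bruijn graph $G$ explicitly. It has $\gamma = |\Gamma|^{r+1}$ nodes and, since each node has at most $|\Gamma|$ out-arcs, $\gamma|\Gamma|$ arcs. As $r$ is a constant of the problem, this is polynomial in $|\Gamma|$ (and in the size of the table $c'$). We prune the $\bot$-nodes, compute strongly connected components by Tarjan's algorithm, and then handle each parameter in turn by reducing it to a standard polynomial graph computation on $G$ or one of its subgraphs.

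For the $\beta$-parameters of sum problems, \cref{lem:wopt-is-poly} and \cref{lem:cycle-vs-walk} tell us that the optimum over cycles equals the optimum mean over closed walks and is attained by a cycle of length at most $\gamma$. We therefore compute $\bopt$ with Karp's minimum (resp. maximum, after negating costs) mean-cycle algorithm on $G$. This runs in $O(\gamma\cdot|A|)$ time, after moving node costs onto incoming arcs, which matches the convention of ignoring $c(w_0)$.

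For $\Gflex$, we decide flexibility of each strongly connected component. By \cref{lem:coprime-cycles-equals-flexible-components} and the cited result (\cref{lem:2gamma-walks-flexibility}), a component is flexible iff the gcd of its closed walk lengths is $1$. This can be checked by BFS levels: the period is the gcd of $\mathrm{dist}(u)+1-\mathrm{dist}(v)$ over arcs $(u,v)$ in the component. We then run Karp on each flexible component to get $\bflex$. $\Ggap$ consists of the flexible components containing a self-loop (a self-loop means a node $(s,\dots,s)$, so this is trivial to test), and we run Karp on these to get $\bgap$. $\bconst$ is simply the best cost among self-loop nodes of $\Ggap$, and $\dgap$ is obtained by comparing $\bgap$ with $\bconst$.

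For $\bcoprime$ in min-max problems, we use monotonicity: $G_\le(r)$ grows with $r$, and only the at most $\gamma$ distinct node costs matter as thresholds. We sort these costs and, for each threshold (or by binary search), test whether $G_\le(r)$ has a flexible node using the period computation above. By \cref{lem:coprime-cycles-equals-flexible-components}, two closed walks of coprime lengths sharing a node is exactly flexibility. Max-min problems are handled symmetrically with $G_\ge(r)$.

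The main obstacle is $\dflex$. We must construct $G_{\bflex}$, that is, identify the nodes and arcs lying on some closed walk of mean cost exactly $\bflex$ within $\Gflex$. I would do this with the shifted cost $c(v)-\bflex$ placed on arcs. Since every cycle in $\Gflex$ has shifted cost $\ge 0$ (for min), there are no negative cycles, so Bellman–Ford (or Floyd–Warshall) gives shortest distances $d(\cdot,\cdot)$. An arc $(u,v)$ lies on a zero-shifted-cost closed walk iff $c'(u,v)+d(v,u)=0$. A subtlety is that $\bflex$ is a minimum over all of $\Gflex$, while the per-component optimum may be worse; such components just contribute no arcs. Exact arithmetic is needed here: $\bflex$ is a rational with denominator at most $\gamma$, so we scale costs by that denominator.

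Then, by \cref{lem:zero_pot-minsum}, every closed walk in $G_{\bflex}$ has cost $\bflex$. By \cref{cor:wflex-is-poly}, $\dflex$ is false iff some strongly connected component of $G_{\bflex}$ has period $1$, which we check once more with the BFS gcd test. Every step is polynomial in $\gamma$, hence in $|\Gamma|$.
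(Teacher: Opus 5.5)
Your proposal is correct, and it takes a genuinely different algorithmic route from the paper. The paper builds everything on one exact-length dynamic program $A(s,K)$, which for each node $s$ and each $k\le K$ finds an optimal closed $s$-walk of length exactly $k$. Polynomiality then rests on walk-length bounds: $K=\gamma$ via \cref{lem:wopt-is-poly,lem:cycle-vs-walk} for $\bopt,\bflex,\bgap$, and $K=2\gamma+1$ via \cref{lem:2gamma-walks-flexibility,cor:wflex-is-poly} for flexibility and $\dflex$. Flexibility and $\dflex$ are read off by searching the returned walks for two coprime lengths (of cost $\bflex$ in the latter case), and bottleneck problems are handled by rerunning the same procedures over sorted thresholds.

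You instead use textbook primitives. Karp's minimum mean cycle gives $\bopt,\bflex,\bgap$. The BFS formula for the period of a strongly connected component serves as the flexibility test. Shortest paths under the additive shift $c(v)-\bflex$ construct $G_{\bflex}$ explicitly as the set of tight arcs, and a period test on its components then decides $\dflex$. The paper's route is uniform, with one subroutine for every parameter. Yours yields exact tests with standard running times and handles exact arithmetic explicitly. It also never needs the bounded-length coprime-pair statement of \cref{lem:2gamma-walks-flexibility}, whose final step (from $\gcd(L')=1$ to two coprime elements of $L'$) needs more justification than the paper gives.

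Two small remarks. First, the last equivalence does not really follow from \cref{cor:wflex-is-poly}, but it is immediate. Two cost-$\bflex$ closed walks of coprime lengths through a common node lie in one component of $G_{\bflex}$ and force its period to be $1$. Conversely, a period-$1$ component contains closed walks of two consecutive lengths through one node, and these cost $\bflex$ by \cref{lem:zero_pot-minsum}. Second, for min-max and max-min problems you do not say how to compute $\bopt$. It is a bottleneck value rather than a mean, so Karp does not apply. Your own threshold scan covers it: take the smallest node cost $\ell$ for which $G_\le(\ell)$ contains a cycle, and symmetrically use $G_\ge(\ell)$ for max-min. Your self-loop rule already gives $\bconst$ for every aggregation.
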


\begin{proof}
  First observe that for any LCL we have $r = O(1)$ and hence any algorithm running in polynomial in $|\Gamma|$ runs in polynomial in $\gamma = |\Gamma|^{r+1}$.

  Let's start by addressing the more complex case of sum-opt LCLs:

  We will use the following subroutine:  
  For $s \in G$, let $A(s,K)$ be the following process: denote if it exists, by
  $F(s,v,k)$ an optimal $(v,s)$ walk of length exactly $k$. Initialize as
  non-existing $F(s, v,0)$ for $v \ne s$ and $F(s, s, 0) = \{s\}$. For all $v$
  and $k\ge 0$ we can compute $F(s,v,k+1)$ using the formula 
  \[
    \arg \obj_{u \in V\cap N^-(v)} \big( c(F(s,u,k)+u) \big), 
  \]
  where $N^-(v)$ represents the predecessors of $v$. Repeat this process until $k = K$. Then output the list of tuples $(k, F(s,s,k))$ for each value of $k$ for which $F(s,s,k)$ exists. $A(s,K)$ computes all the closed walks containing $s$ of length $k$, for all $k \le K$  that are optimal regarding the objective function. It can return multiple elements (at most $K$ tuples). 

  When $K$ is polynomial in $\gamma$, $A(s,K)$ is clearly also polynomial in $\gamma$. We can then use it to obtain all the desired values:
  \begin{itemize}
    \item $\bopt$: run $A(s,\gamma)$ for all $s$ then figure for which $s,k$ we have an optimal $c(F(s,s,k))$. Correctness is ensured by \cref{lem:wopt-is-poly}. Naturally, this procedure will detect also cycles among the closed walks. \cref{lem:cycle-vs-walk} ensures that the value we find will belong to at least one cycle, meeting the definition of $\bopt$. The procedure takes polynomial time in $\gamma$ as we call $A(s,\gamma)$ only $\gamma$ times.
    \item $\bflex$ and $\dflex$: first, determine the set of strongly connected components of $G$. For each of them, determining flexibility can be done by observing the output of $A(s,2\gamma+1)$ for all $s$; as if $s$ is flexible it must belong to two closed walks of coprime lengths (\cref{lem:coprime-cycles-equals-flexible-components}) and there must be two such walks of lengths at most $2\gamma+1$ (\cref{lem:2gamma-walks-flexibility}). Now, for nodes in flexible components, we can determine the value of $\bflex$ by running $A(s,\gamma)$ if $s$ is in a flexible component and pick the component that has the best cycle. As for $\dflex$, it suffices to look at the output of $A(s,2\gamma+1)$ in the good component and check if there are two walks containing $s$, of coprime lengths and cost $\bflex$. In short, all we have to do is to run $A(s,2\gamma+1)$ once for all nodes and all the information we need is in that output. Calling $A(s,2\gamma+1)$ for all $s$ is clearly polynomial in $\gamma$ and all the further analysis of the output is also polynomial.
    
    \item $\bgap$ and $\dgap$: Identify flexible components like explained above, but limit the considered components further by requiring them to contain one node with a self-loop. This procedure is clearly polynomial in $\gamma$ and obtains $\bgap$. Let $S_{\operatorname{loop}}$ be the set of all nodes with a self-loop. Then, $\dgap$ is false if and only if $S_{\operatorname{loop}}$ contains a node with a self-loop of cost $\bgap$. 
    \item $\bconst$: run $A(s,1)$ for every $s$. If $A(s,1)$ is non-empty, take the walk of optimal value (depending on $\obj$).
  \end{itemize}

  As for min/max opt LCLs, the procedure is much easier. Start by listing in $L$ all costs $c(s)$ for all nodes and sort $L$ in ascending if $\obj = \min$ (resp. descending if $\obj = \max$) order. Then, for all $\ell \in L$, we can construct the graphs $G_\le(\ell)$ and $G_\ge(\ell)$ in time polynomial in $\gamma$. Now, do the following:
  For all $\ell \in L$ in ascending (resp. descending order), do the same procedure as for sum-opt LCLs to determine $\bopt$. If the procedure fails (meaning we could not find a closed walk of length at most $\gamma$), keep going progressing in $L$. The first time the procedure succeeds, we have found $\bopt=\ell$. When $\bopt$ is found we can do a similar procedure to find $\bcoprime$: build $G_\le(\ell)$ (resp. $G_\ge(\ell)$) and run $A(s,2\gamma+1)$ for all $s$. As for $\obj$-$\sum$ opt LCLs, this fully determines which components are flexible. The first time a flexible component is found, we have $\bcoprime = \ell$. We repeat the same idea to find $\bconst$. Regarding the complexity, we can notice that all the procedure that were polynomial in $\gamma$ for sum-opt LCLs are called at most $|L| \le \gamma$ times, hence the whole algorithm runs in time polynomial in $\gamma$.
\end{proof}

By showing we can secure the problem parameters efficiently, we can arrive to our main theorem:

\begin{theorem}\label{theo:complex_by_beta}
Given the values of $\bopt$, $\bflex$, $\dflex$, $\bcoprime$, $\bgap$, $\dgap$, and $\bconst$, the lower and upper bound for the complexity of an algorithm solving $\Pi$ or any approximation of $\Pi$ can be derived directly, so classifying $\Pi$ can be done in polynomial time.
\end{theorem}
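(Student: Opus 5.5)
This theorem is the combination of \cref{lem:minsum-maxsum-are-poly} with the correctness of \cref{table:classes-to-params-sum,table:classes-to-params-minmax}. By \cref{lem:minsum-maxsum-are-poly} all seven parameters are computable in time polynomial in $|\Gamma|$. Evaluating the table rows for a given $\alpha$ takes a constant number of comparisons. So the real content is that every row of the tables gives a matching upper and lower bound. I would prove this for $\min\sum$ and obtain the other objectives by symmetric arguments. For $\min\max$ and $\max\min$, each fixed $\alpha$ yields an LCL, namely ``use only nodes of $G_\le(\alpha\bopt)$''. There I would invoke the known trichotomy for LCLs on directed cycles: $O(1)$ iff a self-loop exists, $\Theta(\log^* n)$ iff a flexible node exists (\cref{lem:coprime-cycles-equals-flexible-components}, i.e.\ $\bcoprime$), and $\Theta(n)$ otherwise.

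The upper bounds for $\min\sum$ follow the four strategies from the introduction.
\begin{enumerate}
\item \emph{Constant output.} Repeat a cheapest self-loop of $\Gconst$. This costs $\bconst n$, and since $\opt(n)\ge \bopt n$, the ratio is at most $\bconst/\bopt$.
\item \emph{Deterministic $O(\log^* n)$.} Compute a ruling set via Cole--Vishkin so that segments have length in $[A,B]$, and place a fixed node $x$ of a best flexible component at each boundary. Fill each gap with a closed $x$-walk.
 \begin{itemize}
 \item If $\dflex$ is false, combine the two coprime walks of cost $\bflex$. This gives exact cost $\bflex$ per node once $A$ exceeds the Frobenius bound.
 \item Otherwise, repeat an optimal cycle and use a bounded-length filler. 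This gives $\bflex n + O(n/A)$, hence ratio $\bflex/\bopt+\epsilon$.
 \end{itemize}
\item \emph{Randomized $O(1)$.} Sample random boundaries and treat short segments as above. In rare long segments, walk to a self-loop node of the same $\Ggap$ component, repeat the loop, and walk back. The expected fraction of loop positions is $O(\epsilon)$, so a Chernoff bound over independent segments gives cost $\bgap n+\epsilon n$ w.h.p. If $\dgap$ is false, the loop itself has cost $\bgap$, so the non-strict inequality suffices.
\item \emph{Brute force in $O(n)$} gives the optimum.
\end{enumerate}

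The lower bounds come in three parts.
\begin{itemize}
\item \emph{$o(n)$ algorithms.} Any such algorithm, deterministic or randomized, may only output walks through flexible components. Otherwise, run it on two cycle lengths whose residues are incompatible with a non-flexible component: the local views agree, so a glued instance fails, as in \cite{chang-studeny-suomela-2023-distributed-graph-problems}. Hence its cost is at least $\bflex n$. Choosing $n$ divisible by the length of the $\bopt$ cycle makes $\opt(n)=\bopt n$, which rules out $\alpha<\bflex/\bopt$. The strict-versus-nonstrict boundary when $\dflex$ is true needs infinitely many $n$ with $\opt(n)=\bopt n$ and no flexible solution of cost exactly $\bflex n$. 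For this I would argue that any flexible-component output of average exactly $\bflex$ over all large $n$ would yield two coprime cost-$\bflex$ walks, using \cref{lem:zero_pot-minsum}.
\item \emph{Deterministic $o(\log^* n)$.} Use the Ramsey argument of \cite{czygrinow-hanckowiak-wawrzyniak-2008-fast-distributed}. On monotone ID assignments the output is eventually periodic with constant period, and with period $1$ after passing to a Ramsey subsequence, so almost every node follows a self-loop. The cost is then $\ge \bconst n - o(n)$, which rules out $\alpha\bopt<\bconst$.
\item \emph{Randomized $o(\log^* n)$.} Use Naor's bound \cite{naor-1991-a-lower-bound-on-probabilistic-algorithms-for}: the algorithm cannot break symmetry on all but a small fraction of nodes, so a constant fraction of long stretches must be constant. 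Stitching these to the rest within a valid output forces them into a $\Ggap$ component, giving cost $\ge(\bgap-\epsilon)n$. If $\dgap$ is true, the inequality is strict, because the constant parts are loops of cost $>\bgap$ occupying a linear fraction.
\end{itemize}

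\textbf{Main obstacle.} I expect the randomized lower bound to be the hardest step, in particular turning Naor's probabilistic indistinguishability into a linear fraction of self-loop positions lying inside flexible components. My first attempt would be to fix a constant-radius algorithm, consider consecutive windows with i.i.d.\ random IDs, and show that with constant probability a window's outputs are not a symmetry-breaking (e.g.\ 3-coloring-inducing) pattern. Such a window must then repeat a single de Bruijn node. Linearity of expectation would then give a linear number of such windows.
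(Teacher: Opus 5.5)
Your decomposition matches the paper's proof, namely \cref{lem:minsum-maxsum-are-poly} plus the bounds of \cref{sec:alg}. Your four upper-bound strategies and the Ramsey argument are essentially the paper's. The gaps are in the two lower bounds that decide strict versus non-strict inequalities.

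\textbf{Randomized $\Omega(\log^* n)$ bound.} Your argument rests on the claim that an $o(\log^* n)$-round algorithm leaves a \emph{linear} fraction of constant windows. Your plan for proving it only treats constant radius. The claim is false once the radius $T(n)$ grows. For example, $T$ rounds of independent random color trials, followed by a default color for the leftover nodes, leave only an $e^{-\Omega(T)}=o(1)$ fraction of conflicts. So your $\dgap$-strictness argument does not go through, and a constant-radius argument cannot exclude, say, $O(\log\log^* n)$ rounds.

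The missing observation is that one loop already suffices.
\begin{itemize}
\item A single self-loop node in the output places the entire closed walk in a $\Ggap$ component, so its cost is at least $\bgap n$.
\item A single use of a loop edge when $\bconst>\bgap$ gives cost at least $\bgap n+(\bconst-\bgap)>\bgap n$.
\end{itemize}
Hence, if $\alpha\bopt<\bgap$, or $\alpha\bopt=\bgap$ with $\dgap$ true, then on $n$-cycles with $\opt(n)=\bopt n$ the output w.h.p.\ uses no loop edge. Then consecutive de Bruijn windows satisfy $W(v)\neq W(v+1)$. This is a proper coloring with $|\Gamma|^{r+1}$ colors, reducible to a $3$-coloring in $O(1)$ rounds, which contradicts Naor's bound for every $o(\log^* n)$ algorithm. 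This is the contrapositive used in \cref{lem:symmbreak}.

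\textbf{The $\dflex$ strictness in the $o(n)$ bound.} You aim for infinitely many $n$ with $\opt(n)=\bopt n$ that admit no flexible solution of cost exactly $\bflex n$. Such $n$ need not exist. Take $r=1$ and a single component made of the $2$-cycle $ab,ba$ with cost $1$ and a $3$-cycle $ab,bc,ca$ with larger mean. Then $\bopt=\bflex=1$ and $\dflex$ is true. Yet for every even $n$, which are exactly the $n$ with $\opt(n)=n$, the flexible solution $(ab)^{n/2}$ costs exactly $n$.

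Your averaging argument would need exactness for all large $n$. The approximation guarantee only forces exactness on the $n$ with $\opt(n)=\bopt n$. For $\alpha=1$ this example is $2$-coloring even cycles, which is $\Theta(n)$, but no cost-only argument can show it; you must use locality.

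\cref{lem:flexexpect} does this by revealing only an estimate $N$ with $N/2\le n\le N$. An $o(n)$-round algorithm then behaves identically on an $n_1$-cycle with $\opt(n_1)=\bopt n_1$ and on a cycle of large prime length $n_2$. When $\dflex$ is true, every solution on the $n_2$-cycle costs at least $\bflex n_2+c$ for a constant $c>0$. The reason is that a prime-length closed walk of mean exactly $\bflex$ would lie in an aperiodic component of $G_{\bflex}$.

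Alternatively, you can use your own splicing idea, but as a phase shift at fixed $n$ inside the periodic components of $G_{\bflex}$. Residues of $n$ alone cannot work, since all relevant $n$ may be divisible by the period.
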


While with \cref{lem:minsum-maxsum-are-poly} we have proven already one part of this theorem, the whole next section is dedicated to prove the remaining part.

\section{Problem Parameters Determine Distributed Complexity}\label{sec:alg}

In this final section, we show that the complexity of any given problem for
any approximation ratio $\alpha$ can be determined using the
parameters from \cref{sec:problem-param}.
We do this by showing a series of lower and upper bounds in both deterministic
and \randlcl based on these parameters.
The result of all these observations is that every opt LCL falls into one of
five different classes based on its optimal complexity.
An overview of these classes, which we label from A down to E, is provided in
\cref{table:complexclasses}.

Throughout the previous sections we had to take into account that minimization
and maximization problems behave in distinct ways, for example when referring to
costs. 
However, the results we describe in this section can be readily ported from one
case to the other by simple methods such as flipping signs or inverting
fractions in the relevant inequalities. 
For example, an inequality $\alpha\bopt > \bflex$ in the minimization case
becomes $\alpha / \bopt > 1/\bflex$ in the maximization one (since we are
interested in $1/\alpha$ as a factor instead of $\alpha$).
Then we can simply recast the quantities $1/\bopt$ and $1/\bflex$, respectively,
as, say, $\beta'_{\operatorname{opt}}$ and $\beta'_{\operatorname{flex}}$ and
proceed with the same argument.
For this reason, in the text below we handle only the case of minimization
problems and skip any mention of maximization problems altogether.

\begin{table}
  \centering
  \caption{Any $\alpha$-approximation of an opt LCL $\Pi$ falls into one of these
  five complexity classes, depending on the structure of $\Pi$ and the value of
  $\alpha$.}
  \begin{center}
    \begin{tabular}{lll}
      \toprule
      & \multicolumn{2}{c}{Complexity} \\
      \cmidrule(r){2-3}
      Class & \detlcl & \randlcl \\
      \midrule
      A & $O(1)$ & $O(1)$ \\ 
      B & $\Theta(\log^{*} n)$ & $O(1)$ \\
      C & $\Theta(\log^{*} n)$ & $\Theta(\log^{*} n)$ \\
      D & $\Theta(n)$ & $\Theta(n)$ \\
      E & unsolvable & unsolvable \\
      \bottomrule
    \end{tabular}
  \end{center}
  \label{table:complexclasses}
\end{table}

\subsection{Lower Bounds}

First we prove lower bound results that show that \cref{table:complexclasses} is
complete, that is, that there is no opt LCL whose optimal complexity lies
in-between any of the classes A--E.

\begin{lemma}\label{lem:flexexpect}
  Assume there is a \randlcl algorithm that finds an $\alpha$-approximation of some sum-opt LCL $\Pi$ in $o(n)$ rounds.
  Then $\alpha\bopt \ge \bflex$ holds, where the inequality is strict if and
  only if $\dflex$ is true.
\end{lemma}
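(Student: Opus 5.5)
We only treat $\min\sum$, as the paper does; the statement has two parts. Part (a) is $\alpha\bopt \ge \bflex$ in general. Part (b) is that $\alpha\bopt > \bflex$ whenever $\dflex$ is true. The converse direction of the ``iff'' (when $\dflex$ is false, $\alpha\bopt=\bflex$ is achievable) is an upper-bound matter and is settled by the algorithms later; here it only matters that equality is not excluded.

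\textbf{Step 1 (confinement to $\Gflex$).} Fix a randomized algorithm $\mathcal{A}$ with running time $T(n)=o(n)$. Pick $n$ large and a cycle length $N$ that is a multiple of the length $k$ of a cycle attaining $\bopt$, so that $\opt(N)=\bopt N$. We want to show that, with high probability, almost all nodes output windows (nodes of $G$) lying in $\Gflex$. We adapt the standard flexibility argument of \cite{chang-studeny-suomela-2023-distributed-graph-problems}. Consider a window $w$ output at some node $v$, and look at a long segment of length $L \gg T$ around $v$. By gluing this segment into cycles of two different lengths $N_1, N_2$ that share the $T$-neighbourhoods of the segment's nodes, the same random bits and IDs give the same outputs on the segment. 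Validity on both cycles then yields closed walks through $w$ of lengths $m$ and $m+1$, for all large $m$. Hence $w$ is flexible.

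More carefully, we argue on the whole run: the output on an $N$-cycle is a closed walk in $G$. Suppose a positive fraction of the output windows were non-flexible. Then we could splice that run into a cycle of length $N+1$, which is indistinguishable from the $N$-cycle outside a window of size $O(T)$. Any closed walk passing through a non-flexible strongly connected component has length fixed modulo that component's period $p>1$. So with constant probability the splicing produces an invalid output, which contradicts the high-probability validity guarantee. The conclusion is that, with high probability, all but $O(T)=o(N)$ positions of the output walk stay inside a single flexible component.

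\textbf{Step 2 (cost bound for part (a)).} The output walk is a closed walk of length $N$. It consists of a closed walk in $\Gflex$ together with $o(N)$ deviating positions, which can be rerouted by a bounded-length path. Closed walks in $\Gflex$ have average cost at least $\bflex$, since a closed walk decomposes into cycles and we may use \cref{lem:cycle-vs-walk}. Therefore the output cost is at least $\bflex N - o(N)$. The approximation guarantee gives $\alpha\bopt N \ge \bflex N - o(N)$, and letting $N\to\infty$ yields part (a).

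\textbf{Step 3 (strictness when $\dflex$ is true).} Suppose $\dflex$ is true but $\alpha\bopt = \bflex$. Then the output walk must have average cost $\bflex + o(1)$. By \cref{lem:zero_pot-minsum}, windows outside $G_{\bflex}$ carry positive shifted cost. Any closed walk decomposes into cycles, and every cycle not inside $G_{\bflex}$ has shifted cost at least a fixed $\epsilon>0$ (finitely many simple cycles). Hence only $o(N)$ positions of the walk leave $G_{\bflex}$. Now $\dflex$ true means no node of $G_{\bflex}$ lies on two closed walks of coprime length within $G_{\bflex}$, so every component of $G_{\bflex}$ is non-flexible and has period $p\ge2$. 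We then run the Step 1 splicing argument again with $G_{\bflex}$ in place of $G$, for two indistinguishable lengths $N$ and $N+1$ with $k\mid N$ (or alternately with $N$ chosen among good lengths). Long stretches confined to a single periodic component cannot absorb a $+1$ length change. Consequently, at least one additional detour through windows outside $G_{\bflex}$ is forced per $O(T)$-window, and each such detour costs an extra $\Omega(1)$.

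This is the main obstacle: a single forced detour only costs $O(1)$ in total, which is not enough. To get a linear loss, we would cut the $N$-cycle into $\Theta(N/L)$ blocks, each of length $L\gg T$, and argue by an averaging or indistinguishability argument. The blocks can be reassembled in length combinations whose counts are incompatible with the periods. Then a constant fraction of blocks must each contain a detour of extra cost $\ge \epsilon$ per block. Choosing $L$ as a large constant multiple of $T$ gives an extra cost $\Omega(N/L)$, which is only $o(N)$. So we instead choose instances whose optimum is $\bopt N$ at lengths where $G_{\bflex}$-walks are impossible modulo $p$, forcing a sustained cost gap. If this fails, the fallback is a direct bound: for $N \not\equiv 0 \pmod p$, any closed walk of length $N$ must have average cost at least $\bflex+\delta(N)$. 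We would then combine this with the Step 1 confinement to obtain a uniform strict gap.
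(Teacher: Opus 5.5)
Your proof of the non-strict inequality $\alpha\bopt\ge\bflex$ is sound but argues differently from the paper. It can also be tightened. A closed walk never leaves its strongly connected component, so on the success event the entire output lies in one flexible component and costs at least $\bflex N$ exactly. There is no $o(N)$ slack and nothing to reroute. The paper instead reveals to every node one common estimate $N$ with $N/2\le n\le N$. This is also what licenses your tacit assumption that the algorithm behaves identically on the $N$- and $(N+1)$-cycles. It then observes that the expected per-node output cost $\eta(N)$ is the same on all such cycles. Finally it sandwiches $\eta(N)$ between an $n_1$-cycle with $\opt(n_1)=\bopt n_1$ and an $n_2$-cycle on which every solution costs at least $\bflex n_2$.

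The genuine gap is strictness when $\dflex$ is true. Step~3 ends without an argument, and the target it aims for, a sustained excess $\delta N$ for a fixed $\delta>0$, is false in general. Cutting the cycle into segments of slowly growing length, and filling each with repetitions of a cheapest flexible cycle plus a bounded connector, yields $o(n)$-round algorithms of average cost $\bflex+o(1)$.

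The missing idea is that no limit should be taken. If $\alpha\bopt=\bflex$, then on an $N$-cycle with $\opt(N)=\bopt N$ the guarantee forces cost at most $\bflex N$, i.e.\ \emph{zero} excess. So w.h.p.\ the output walk uses only arcs of $G_{\bflex}$ and lies in one component of $G_{\bflex}$. That component has period $p\ge 2$ because $\dflex$ is true. Your $N$ versus $N+1$ splice cannot refute this, since one detour is allowed on the $(N+1)$-cycle. The contradiction has to be obtained on an instance where zero excess is mandatory, or transferred exactly.

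The paper transfers it exactly. Take your own ``fallback'' fact and apply it to some $n_2\in[N/2,N]$ divisible by none of the periods of the non-flexible components of $G$ and of the components of $G_{\bflex}$. This gives $\opt(n_2)\ge\bflex n_2+\epsilon$ for a constant $\epsilon>0$. Indistinguishability then yields $\bflex+\epsilon/n_2\le\eta(N)\le\alpha\bopt$ at that single $N$, which is already strict.

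Staying closer to your Step~1, you can instead keep the length $N$ fixed and move one node, with its identifier and random bits, to a far-away position. Windows away from the two splice points are unchanged, so with probability $1-O(1/N)$ both runs lie in the same period-$p$ component. Yet the nodes at distance $T'>T+r$ on either side of the removed position keep their windows while their distance drops by one. This contradicts the fact that the cyclic class modulo $p$ advances by exactly one per step.
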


Since \randlcl is stronger than \detlcl, this directly implies a lower bound of
$\Omega(n)$ for \detlcl as well.

\begin{proof}
  It will be convenient to make the statement of the claim stronger so that it holds even if the algorithm knows a factor-$2$ approximation of $n$, the number of nodes; that is, all nodes get to see the same value $N$ such that $N/2 \le n \le N$. As this lemma is a lower bound, making the algorithm more powerful will only make the result stronger.

  With knowledge of $n$, we can then switch from the \randlcl model (where we have unique identifiers) to the randomized PN model (port-numbering model, where nodes are anonymous): if there is a \randlcl algorithm, we can simulate it in the randomized port-numbering model if each node first uses randomness to generate a fake unique identifier uniformly at random from $\{1,\dotsc,n^c\}$ for a suitable constant $c$; these fake identifiers will be globally unique w.h.p.\ (by union bound), so if the original algorithm is a Monte Carlo algorithm that works w.h.p.\ assuming unique identifiers, the new algorithm is also a Monte Carlo algorithm that works w.h.p.\ assuming unique identifiers.

  So now we can assume that $A$ is a randomized PN-model algorithm that w.h.p.\ outputs an $\alpha$-approximation of $\Pi$. Furthermore, we can assume that our port numbering is derived from the orientation of the cycle, so all $o(n)$-radius neighborhoods are isomorphic.

  Fix some estimate $N$, some input cycle $C$, some node $v$, and consider the following experiment. Apply $A$ to $C$, condition on the high-probability event that $A$ does not fail, and look at the output of nodes $N_{r+1}(v)$ in the local neighborhood of $v$; let $X_0, \dotsc, X_r$ be these outputs (which are now random variables). Then calculate the expected value $\eta = \expectation[c(X_0, \dotsc, X_r)]$ of the cost function. We make two observations:
  \begin{enumerate}
    \item As all $o(n)$-radius neighborhoods in all input cycles are isomorphic, the output distribution of $X_0, \dotsc, X_r$ cannot depend on $C$ or $v$, but it only depends on $N$; hence in what follows we will denote this value with $\eta(N)$ to emphasize that it is a function of $N$.
    \item If $C$ is an $n$-cycle, then $\eta(N) n \ge \opt(n)$; otherwise algorithm $A$ (when it succeeds) would produce an output that is in expectation cheaper than the optimal solution.
  \end{enumerate}
 
  Now let $N/2 \le n_1 < n_2 \le N$ be sufficiently large values such that:
  \begin{itemize}
    \item in $n_1$-cycles we have $\opt(n_1) = \bopt n_1$,
    \item in $n_2$-cycles the optimum is $\opt(n_2) = \bflex n_2 + O(1)$.
  \end{itemize}
  We will use $N$ as the factor-$2$ approximation of the number of nodes $n$ that we reveal to the algorithm, so the $o(n)$-round algorithm does not know if we are in an $n_1$-cycle or $n_2$-cycle.
  
  Now in $n_2$-cycles the algorithm (conditioned on not failing) will find a solution with the expected total cost $\eta(N) n_2$, which has to be at least $\bflex n_2 + O(1)$, as better solutions do not exist.

  On the other hand, in $n_1$-cycles the algorithm (conditioned on not failing) will find a solution with the expected total cost $\eta(N) n_1$. As the algorithm is then supposed to find an $\alpha$-approximation, the cost of any solution produced by the algorithm in $n_1$-cycles has to have cost at most $\alpha \opt(n_1) = \alpha \bopt n_1$, and in particular the expected cost cannot be higher than that.

  Put together, we have
  \[
    \bflex + \frac{O(1)}{n_2} \le \eta(N) \le \alpha \bopt,
  \]
  from which the claim follows.
\end{proof}

\begin{lemma}\label{lem:lowercoprime}
If there is a \randlcl algorithm that finds an $\alpha$-approximation of some $\min$-$\max$ LCL $\Pi$ in $o(n)$ rounds then $\alpha\bopt \ge \bcoprime$.
\end{lemma}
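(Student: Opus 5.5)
I will follow the template of \cref{lem:flexexpect}, but replace the expectation argument with a support argument, since for min-max the cost is a maximum rather than an average. As before, I first strengthen the claim by revealing to all nodes an estimate $N$ with $N/2 \le n \le N$. Then I simulate the \randlcl algorithm in the randomized port-numbering model using random fake identifiers, which are unique w.h.p. All $o(n)$-radius neighborhoods are then isomorphic. So the output distribution of any window $X_0,\dots,X_r$ around a node depends only on $N$, not on the cycle or the node.

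Next I pick $n_1 \in [N/2,N]$ with $\opt(n_1) = \bopt$. Such $n_1$ exist: take a cycle of length $k$ in the de Bruijn graph $G$ realizing $\bopt$, and let $n_1$ be a multiple of $k$ in the range. Running $A$ on the $n_1$-cycle, w.h.p.\ every node's window has cost at most $\alpha\bopt$. Let $S \subseteq V(G)$ be the set of de Bruijn nodes (windows) that $A$ outputs with probability at least, say, $1/n^2$. The window distribution is identical everywhere, so by a union bound, with probability $1 - o(1)$ the output on any cycle of length in $[N/2,N]$ uses only windows in $S$. Every window in $S$ has cost at most $\alpha\bopt$: otherwise the $n_1$-cycle would fail with probability at least $1/n^2$ at each of $n_1$ independent-ish positions. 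More carefully, the failure probability is at least $1/n^2$ at a fixed node, which I compare to the allowed $1/n$ failure rate by choosing the threshold $1/n^{1/2}$ instead. Hence $S \subseteq V_\le(\alpha\bopt)$.

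Now consider two cycle lengths $n_2, n_3 \in [N/2,N]$ that are coprime; consecutive integers suffice. On each, the algorithm succeeds w.h.p., and its output is a closed walk in $G$ of length $n_2$ (resp.\ $n_3$) using only windows of cost at most $\alpha\bopt$. These are closed walks in $G_\le(\alpha\bopt)$ of coprime lengths. The \textbf{main obstacle} is that they must share a node, as \cref{def:bcoprime} requires.

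To handle this, I will use that $S$ has bounded size $\gamma$ and that $n_2,n_3$ can be chosen freely. The union of all walks inside the subgraph induced by $S$ lives in some strongly connected components of $G_\le(\alpha\bopt)$. A long closed walk (length $\gg \gamma$) stays inside a single component except for a bounded number of transitions; in fact a closed walk lies entirely in one strongly connected component. Since there are at most $\gamma$ components, I choose many pairwise coprime lengths in $[N/2,N]$, e.g.\ more than $\gamma$ distinct primes, which exist for large $N$. By pigeonhole, two of the resulting closed walks lie in the same component, and their lengths are coprime. Within one strongly connected component, two closed walks can be joined to a common node $u$. Adding the same connecting excursions to both does not preserve coprimality directly, so instead I argue at the level of periods: the gcd of closed-walk lengths through any node of a strongly connected component equals the component's period. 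The period divides both coprime lengths, hence equals $1$. So $u$ lies on two closed walks of coprime length inside $G_\le(\alpha\bopt)$ (\cref{lem:coprime-cycles-equals-flexible-components}). This gives $\bcoprime \le \alpha\bopt$.
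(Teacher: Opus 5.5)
Your overall route matches the paper's. You reveal $N$, pass to the randomized port-numbering model so the window distribution depends only on $N$, and compare an $n_1$-cycle with $\opt(n_1)=\bopt$ against cycles of other lengths. Your period argument is exactly what justifies the paper's bare assertion that $\opt(n_2)=\bcoprime$ for suitable $n_2$, and that combinatorial half is fine. In fact a single prime $p\in[N/2,N]$ with $p>\gamma$ already suffices, since the period of a strongly connected component is at most $\gamma$.

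\textbf{The gap is the threshold $\tau$ defining $S$.} You need two things from it, and they pull in opposite directions:
\begin{itemize}
\item For ``the output on an $n_2$-cycle uses only $S$-windows'', you union-bound over $n_2\le N$ nodes and up to $\gamma$ windows. This needs $\gamma N\tau<1$.
\item For ``$S\subseteq V_\le(\alpha\bopt)$'', you argue that an expensive window of per-node probability $\ge\tau$ makes the $n_1$-cycle fail with probability $\ge\tau$. This contradicts the $1/n$ failure bound only if $\tau>1/n_1$.
\end{itemize}
With $\tau=1/n^2$ the second claim is unsupported: an expensive window with per-node probability $n^{-2}$ is consistent with total failure probability at most $1/n$. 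With $\tau=n^{-1/2}$ the first claim is false in general: the union bound gives $\gamma\sqrt n$, and a window of per-node probability $n^{-0.9}$ does appear somewhere w.h.p. As written, no single threshold makes both steps go through.

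\textbf{The missing ingredient is independence of far-apart outputs.} You allude to it (``independent-ish positions'') but then abandon it. In the port-numbering simulation with $T=o(n)$ rounds, the window at $v$ depends only on random bits within distance $T+r$ of $v$. So windows at nodes more than $2T+r$ apart are independent. Take $\tau=1/(2\gamma N)$.
\begin{itemize}
\item The union bound then shows that on the $n_2$-cycle the output is valid and uses only $S$-windows with probability at least $1/2-1/n_2>0$. That is all you need to extract a closed walk of length $n_2$ in $G_\le(\alpha\bopt)$.
\item If some expensive window had per-node probability $\ge\tau$, take $m=\Theta(n_1/T)$ pairwise far-apart nodes of the $n_1$-cycle. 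At least one of them outputs that window with probability $\ge 1-(1-\tau)^m=\Omega(1/T)=\omega(1/n_1)$, contradicting the failure bound.
\end{itemize}
Alternatively, you can drop $S$ entirely. Assume $\alpha\bopt<\bcoprime$. Then every valid output on a prime-length $n_2$-cycle with $n_2>\gamma$ contains a window of cost $>\alpha\bopt$. So the per-node probability of such a window is at least $(1-1/n_2)/n_2$, and the same independence argument on the $n_1$-cycle gives the contradiction. The paper's sketch (``the largest output cost that happens with a non-negligible probability'') also leaves this quantification implicit. Still, this is precisely the step where a rigorous proof has to do real work.
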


\begin{proof}
Following the proof of \cref{lem:flexexpect}, replace the expected output cost with the largest output cost that happens with a non-negligible probability; that is, instead of $\eta(N)$, consider $\eta'(N)$, which is the largest value that the algorithm outputs in those cases in which it succeeds. Additionally replace $\bflex$ with $\bcoprime$ throughout the proof, for example change ``in $n_2$-cycles the optimum is $\opt(n_2) = \bflex n_2 + O(1)$'' to ``in $n_2$-cycles the optimum is $\opt(n_2) = \bcoprime$'' as we are not summing.

An alternative approach can be found in \cite{chang-studeny-suomela-2023-distributed-graph-problems} (Lemma 6.3.), which can be applied also to the \randlcl model.
\end{proof}

\begin{lemma}\label{lem:symmbreak}
  If an opt LCL $\Pi$ can be $\alpha$-approximated in $o(\log^\ast n)$ rounds
  in the \randlcl model, then $\alpha\bopt \geq \bgap$.
  Additionally, if $\dgap$ is true, then this inequality is strict.
\end{lemma}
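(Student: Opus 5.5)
The plan is to follow the template of \cref{lem:flexexpect}. We give the algorithm a factor-$2$ estimate $N$ of $n$ and simulate it in the randomized port-numbering model, so that the output distribution around a node depends only on $N$. We then compare two cycle lengths $N/2 \le n_1, n_2 \le N$. For $n_1$ we have $\opt(n_1) = \bopt n_1$, so conditioned on success the expected cost per node $\eta(N)$ satisfies $\eta(N) \le \alpha\bopt$. The new ingredient is a lower bound on $\eta(N)$ of the form $\bgap$, or $\bgap + \delta$ with $\delta > 0$ a constant when $\dgap$ is true. This bound has to come from the fact that $o(\log^* n)$ rounds do not suffice to break symmetry.

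\textbf{Step 1 (the output lives in one component).} A valid output on a cycle is a closed walk in the de Bruijn graph, and a closed walk stays inside a single strongly connected component $H$. Since the algorithm runs in $o(n)$ rounds, the argument behind \cref{lem:flexexpect} shows that $H$ must be flexible w.h.p., for $n$ large. Suppose $H$ contained no self-loop node. Then no $(r+1)$-tuple in the output is constant, and the algorithm would solve, in $o(\log^* n)$ randomized rounds, the LCL ``output a closed walk in $H$''. This is an LCL on directed cycles whose flexible components contain no self-loop. By the classification of LCLs on directed cycles \cite{chang-studeny-suomela-2023-distributed-graph-problems}, which relies on Naor's randomized version of Linial's bound \cite{naor-1991-a-lower-bound-on-probabilistic-algorithms-for}, such an LCL needs $\Omega(\log^* n)$ rounds also in \randlcl. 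Hence w.h.p.\ $H$ is a component of $\Ggap$. Every closed walk in $H$ has average cost at least $\bgap$, so on success the realized cost is at least $\bgap n_1$. Taking expectations gives $\bgap \le \eta(N) \le \alpha\bopt$, which is the non-strict claim.

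\textbf{Step 2 (strictness when $\dgap$ is true).} Now $\bconst > \bgap$, since $\Gconst \subseteq \Ggap$ and the two values differ. No node with cost $\bgap$ has a self-loop, and the cheapest self-loop costs $\bconst$. Fix a large constant $L$. The key claim is that a constant fraction $\rho > 0$ of nodes, in expectation, lie in \emph{constant runs}, i.e.\ maximal stretches following one self-loop, of length at least $L$. To prove it, suppose not. We group the output into blocks of length $L$ and view the algorithm as solving, on an $\epsilon$-fraction-fault-tolerant basis, the LCL with radius $L$ that forbids constant windows of length $L$. Using Naor's indistinguishability argument on neighborhoods with monotonically increasing identifiers, a node running an $o(\log^* n)$ algorithm sees, with constant probability, a neighborhood in which it cannot distinguish itself from its neighbors. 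This should force locally constant outputs with constant probability per node.

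Given the claim, we decompose the output closed walk into its constant runs and the remaining part. The constant runs cost at least $\bconst$ per node. The remaining part, after closing it up with $O(r)$ boundary overhead per run, costs at least $\bgap$ per node. Since runs have length at least $L$, the number of runs is at most $n/L$, so the boundary overhead is $O(n/L)$. This yields $\eta(N) \ge \bgap + \rho(\bconst - \bgap) - O(1/L)$. Choosing $L$ large makes this strictly above $\bgap$, hence $\alpha\bopt > \bgap$.

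The main obstacle is the density claim in Step 2. Naor's bound, and the cited classification, give failure \emph{somewhere} rather than a positive \emph{fraction} of constant nodes. My first attempt would be a Ramsey-style argument in the spirit of \cite{czygrinow-hanckowiak-wawrzyniak-2008-fast-distributed}, adapted to randomness. Consider segments with increasing identifiers. The coins of nodes at distance more than $2T$ are independent, so the events that disjoint windows of length $L$ are non-constant are independent. If each such window were non-constant with probability close to $1$, we could boost this, by a union bound over a single neighborhood, into a randomized $o(\log^* n)$ algorithm that avoids constant $L$-windows everywhere w.h.p. That would contradict Step 1's lower bound applied to the radius-$L$ LCL.
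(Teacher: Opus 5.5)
Your Step~1 is in substance the paper's argument for the non-strict inequality: an output with no constant window yields a ruling set and then a $3$-coloring, which contradicts Naor's bound.

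\textbf{The gap is in Step~2.} You do not prove the density claim, and your sketch of it does not work. Knowing that each fixed $L$-window is non-constant with probability $1-\epsilon$ cannot be turned into ``no constant window anywhere w.h.p.''\ by a union bound; that would need $\epsilon \ll 1/n$.

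The randomized constant-fragment algorithm of class~B is exactly such a case. Almost every window is non-constant, yet constant runs occur somewhere w.h.p. Moreover, if you let its radius $T(n)$ grow while keeping $T(n)=o(\log^* n)$, the fraction of nodes in long constant runs is $e^{-\Omega(T)}\to 0$. So a $\rho>0$ independent of $n$ need not exist. Your overhead bound $O(n/L)$ is also too crude, since $\rho$ depends on $L$; only the at most $\rho n/L$ long runs should be charged.

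\textbf{The missing idea} is that the approximation guarantee constrains every successful run, not just the expectation $\eta(N)$. This makes any density statement unnecessary.

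Suppose $\alpha\bopt \le \bgap < \bconst$. On $n_1$-cycles with $\opt(n_1)=\bopt n_1$, every successful output costs at most $\bgap n_1$. A closed walk in a $\Ggap$ component decomposes into simple cycles of average cost at least $\bgap$. Each traversal of a self-loop is such a cycle, with cost at least $\bconst > \bgap$. Hence a successful output contains no constant window of length $r+2$ at all; in self-loop-free components this holds trivially.

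You are then back in the symmetry-breaking reduction of Step~1. This is how the paper handles the $\dgap$ case: optimal-cost walks in $\Ggap$ must avoid self-loops entirely.
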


The proof follows more or less directly from well-known results on symmetry
breaking and, in particular, the classical $\Omega(\log^\ast n)$ lower bound for
$3$-coloring \cite{naor-1991-a-lower-bound-on-probabilistic-algorithms-for}.
Note that this lower bound holds even when we are allowed to err, for instance,
with probability $1/\Omega(\log n)$ (instead of $1/n$ as usual).

\begin{proof}
  We prove the contrapositive, that is, assuming $\alpha \bopt < \bgap$ we have
  that any \randlcl algorithm $\mathcal{A}$ that $\alpha$-approximates $\Pi =
  (\Gamma, r, c', \aggr, \obj)$ must have locality $\Omega(\log^* n)$ rounds.
  Indeed, suppose we have such an algorithm $\mathcal{A}$.
  In light of the preceding \cref{lem:flexexpect}, we may additionally assume
  that $\alpha \bopt \ge \bflex$ holds.
  Put together with the previous inequality, this means that with high
  probability the output of $\mathcal{A}$ is contained in $\Gflex$ but not in
  $\Ggap$ for infinitely many values of $n$.
  In particular, this means the solutions produced by $\mathcal{A}$ avoid the
  components of $\Gflex$ that contain a self-loop, which can be translated to
  avoiding constant substrings of length $r+1$ (i.e., the strings of
  $\Gamma^{r+1}$) with high probability.
  Therefore, in the output of $\mathcal{A}$ there are at least two distinct
  labels in each neighborhood of size $r$ of every node (again, with high
  probability).  
  From this it is possible to extract a ruling set and, by known techniques
  (see, e.g., \cite{awerbuch-goldberg-etal-1989-network-decomposition-and}), a
  $3$-coloring of the cycle.
  Since these reduction steps to $3$-coloring can be performed with $O(1)$
  locality and the compounded error remains smaller than $1/n^{\Omega(1)}$, from
  the aforementioned lower bound
  \cite{naor-1991-a-lower-bound-on-probabilistic-algorithms-for} it follows that
  $\mathcal{A}$ has locality $\Omega(\log^* n)$.

  To prove the addendum concerning $\dgap$, suppose that we have $\alpha \bopt =
  \bgap < \bconst$.
  By the material in \cref{sec:efficient}, this inequality implies that
  self-loops are strictly more expensive than the minimal walks in $\Ggap$, and
  hence a minimal walk in $\Ggap$ must avoid self-loops entirely.
  If $\mathcal{A}$ produces such solutions, then it must again avoid constant
  substrings of length $r+1$ altogether, and thus the rest of the argument goes
  through as just written.
\end{proof}

\begin{lemma}\label{lem:ramseyconst}
  If an opt LCL $\Pi$ can be $\alpha$-approximated in $o(\log^\ast n)$ rounds in the
  \detlcl model, then $\alpha\bopt \geq \bconst$.
\end{lemma}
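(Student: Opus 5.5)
We prove the contrapositive with the Ramsey-theoretic technique of Czygrinow, Hanckowiak, and Wawrzyniak. Fix a deterministic algorithm $\mathcal{A}$ with locality $t = t(n) = o(\log^* n)$ that $\alpha$-approximates $\Pi$. Since the cycle is directed, the output of a node $v_i$ is a function $f$ of the ordered tuple of identifiers $(\mathrm{id}(v_{i-t}),\dots,\mathrm{id}(v_{i+t}))$, so $f$ colours $(2t+1)$-element sequences of identifiers with colours from $\Gamma$.

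First, restrict to increasing sequences. For increasing tuples, $f$ is a colouring of the $(2t+1)$-subsets of $[n^c]$ with $|\Gamma|$ colours. For $t = o(\log^* n)$, the hypergraph Ramsey number $R_{2t+1}(m;|\Gamma|)$ is a tower of height $O(t)$ in $m$. Hence for any constant $m$ (indeed for $m$ growing, as long as $m \le n$) there is a set $S \subseteq [n^c]$ of size $m$ on which $f$ is monochromatic, say with colour $\sigma\in\Gamma$. This is the standard calculation; it is where $t=o(\log^* n)$ is used.

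Next, use this to build a bad instance. Take $n$ large and consider an $n$-cycle in which a long path of nodes, say all but $2t+r$ nodes, carries identifiers from $S$ in increasing order. The remaining nodes get arbitrary fresh identifiers. Every node whose radius-$t$ view plus $r$ successors lie inside the increasing stretch outputs $\sigma$. So the solution contains a run of $n - O(t+r)$ consecutive copies of $\sigma$. Validity forces $c(\sigma^{r+1}) \ne \bot$, so the de Bruijn node $\sigma^{r+1}$ has a self-loop. The cost of this solution is therefore at least $(n - O(t))\,c(\sigma^{r+1}) \ge (n-o(n))\bconst$ in the min-sum case. One caveat: $\bconst$ is a minimum over self-loops in $\Gconst\subseteq\Ggap$, whereas $\sigma^{r+1}$ might lie in a non-flexible component. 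To handle this, compare the cost against the minimum over all self-loops of $G$. Alternatively, argue that $\sigma^{r+1}$ must lie in a flexible component. Varying $n$ by one while keeping the same long increasing stretch forces the walk through the component of $\sigma^{r+1}$ to close up for two consecutive lengths, giving coprime closed walks, so the component is flexible and contains a self-loop. Then $c(\sigma^{r+1})\ge \bconst$.

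Finally, choose $n$ with $\opt(n)=\bopt n$, namely a multiple of the optimal cycle length from \cref{lem:wopt-is-poly}. The approximation guarantee then gives $(n-o(n))\bconst \le \alpha\bopt n$. Letting $n\to\infty$ yields $\alpha\bopt\ge\bconst$. For min-max, the run of $\sigma$'s directly gives a node of value $c(\sigma^{r+1})\ge\bconst$, and $\opt(n)=\bopt$ for suitable $n$. The maximization variants follow by the sign flip described at the start of the section.

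I expect the main obstacle to be the flexibility step. We must ensure $\sigma^{r+1}$ lands in $\Ggap$ (hence in $\Gconst$) rather than merely being some self-loop of $G$, and we must handle the $O(t)$ boundary nodes whose outputs are uncontrolled. My first approach is to use two instances of lengths $n$ and $n+1$ sharing the same monochromatic stretch. The boundary segments then give $(\sigma^{r+1},\sigma^{r+1})$-walks of two consecutive lengths, and adding self-loops yields coprime closed walks through $\sigma^{r+1}$, invoking \cref{lem:coprime-cycles-equals-flexible-components}.
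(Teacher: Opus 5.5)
There is a genuine gap in your second step. The bad instance needs a single monochromatic set $S$ of size $n-2t-r$, and Ramsey's theorem cannot provide that. With $s=2t+1\ge 3$, the hypergraph Ramsey bound only guarantees a monochromatic set of size roughly an $O(t)$-fold iterated logarithm of $n$. Nothing close to linear can be forced in general: already for triples and two colours, a random colouring shows $R_3(m;2)\ge 2^{\Omega(m^2)}$, so there are colourings of the triples of $[N]$ with no monochromatic set larger than $O(\sqrt{\log N})$. Your parenthetical ``indeed for $m$ growing, as long as $m\le n$'' is therefore false. A single constant stretch of sublinear length gives nothing for sum problems, because nothing constrains the outputs on the remaining $n-o(n)$ nodes. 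For min-max your argument is fine, since one window $\sigma^{r+1}$ suffices, and that needs only $|S|\ge 2t+r+1$.

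The missing idea, which is what the paper does, is to apply Ramsey's theorem repeatedly. Extract a monochromatic $k$-set from the identifiers not yet used, with $k/(t+r)\to\infty$ but $R_{2t+1}(k;|\Gamma|)\le\sqrt n$; this is possible because $t=o(\log^* n)$. Then tile the cycle with about $n/k$ increasing blocks plus $o(n)$ leftover nodes. In block $j$, all but $O(t+r)$ nodes output the block's colour $\ell_j$, so only $O(n(t+r)/k)+o(n)=o(n)$ outputs are uncontrolled.

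You do not need the paper's extra step forcing all $\ell_j$ to coincide. Each $\ell_j^{r+1}$ is a valid self-loop node, so block $j$ contributes at least $(k-O(t+r))\bconst$. Uncontrolled nodes contribute at least $0$ (or at most $\max_x c(x)$ in the max-sum case). This yields $(n-o(n))\bconst\le\alpha\bopt n$ for $n$ with $\opt(n)=\bopt n$.

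Finally, the obstacle you single out is not one. A self-loop is a closed walk of length $1$, so every node with a self-loop has closed walks of all lengths and is $1$-flexible. Hence $\Gconst$ is just the set of all valid self-loop nodes, $c(\sigma^{r+1})\ge\bconst$ is immediate, and the $n$ versus $n+1$ argument can be dropped.
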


Our proof follows a similar idea of reference
\cite{czygrinow-hanckowiak-wawrzyniak-2008-fast-distributed} based on an
application of Ramsey's theorem.
For a set $X$ and $k \in \N_+$, let us write $\binom{X}{k}$ for the subsets of
$X$ of size exactly $k$.
For $k,r,s \in \N^+$ where $k \ge s \ge 2$, we let $R(k;r,s)$ denote the size of
the smallest $n \in \N^+$ such that every $r$-coloring $c\colon \binom{[n]}{s}
\to [r]$ of subsets of size $s$ of $[n]$ admits a subset $Y \in \binom{[n]}{k}$
for which $c$ is monochromatic on $\binom{Y}{s}$.
Such a number always exists by (the finite version of) Ramsey's theorem (see,
e.g., \cite{graham-rothschild-spencer-1991-ramsey-theory}) and can be
asymptotically upper-bounded \cite{erdos-rado-1952-combinatorial-theorems-on} by
a power tower of height $s$, that is,
\[
  R(k;r,s) = 2^{2^{\iddots^{2^{O(rk)}}}}
\]
where the number of occurrences of $2$ is equal to $s-1$.

The core of the argument is that, in the $o(\log^\ast n)$ regime, we can use
Ramsey's theorem to pick identifiers that force the output of the \detlcl
algorithm to be constant almost everywhere (i.e., on all but $o(n)$ nodes).
This fact is more or less folklore; we give a full proof for the sake of
completeness.

\begin{proof}
  It is sufficient to show that, for any \detlcl algorithm $\mathcal{A}$ with
  locality $T = o(\log^\ast n)$ and large enough $n$, there is an assignment of
  identifiers such that all but $o(n)$ many labels outputted by $\mathcal{A}$
  are identical.
  If that is the case, then we know that the value of the solution is such that
  $\alpha\bopt \ge \bconst - o(1)$.
  Since $\alpha$ is a constant, this implies $\alpha\bopt \ge \bconst$.

  Let us then prove the aforementioned statement.
  In preparation to apply Ramsey's theorem, we first set some parameters.
  (A reader who is not familiar with the argument may wish to ignore this
  paragraph for now and skip to the exposition below.)
  Set $r = \abs{\Gamma}$, $s = 2T + 1$, and $k = 2s^2$.
  In addition, let $m = (n - \sqrt{n})/k$.
  Observe that $n - mk = \sqrt{n}$ is at the same time $o(n)$ and larger than
  $R(k;r,s)$. 
  The latter holds because $r = O(1)$ and then
  \[
    k = O(\log^\ast n)^2 = o(\log\log^\ast n) = o(\log^s n)
  \]
  (where $\log^s$ indicates $s$-fold composition of $\log$ with itself), which
  together with the previous power tower estimate gives us $R(k;r,s) = O(\log
  n)$.

  Given the algorithm $\mathcal{A}$, we observe that it induces an $r$-coloring
  of $\binom{[n]}{s}$ as follows:
  Take a segment of the cycle with $s = 2T+1$ nodes $v_{-T},\dots,v_0,\dots,v_T$
  (arranged in this order) corresponding to the $T$-neighborhood of the node
  $v_0$.
  Given a set of $s$ identifiers $\mathfrak{i} = \{ i_{-T},\dots,i_T \} \in
  \binom{[n]}{s}$ where $i_{-T} < \dots < i_T$, place each identifier $i_j$ on
  node $v_j$ and consider the output of $\mathcal{A}$ at $v_0$.
  Since $\mathcal{A}$ is deterministic, the output is a fixed label
  $\ell_\mathfrak{i} \in \Gamma$.
  
  Applying Ramsey's theorem, we obtain that there is a set $I \in
  \binom{[n]}{k}$ with the property that, if we place the identifiers of $I$ in
  increasing order on a segment of $k$ nodes around the cycle, then the output
  of $\mathcal{A}$ inside this entire segment (i.e., on nodes that only see
  identifiers in $I$) is the \emph{same} label $\ell_I \in \Gamma$.

  Using our choice of parameters, we apply Ramsey's theorem not only once but
  recursively $m$ times in total (since $n-mk \ge R(k;r,s)$) to obtain a
  partition $I_0 + \dots + I_m = [n]$ where:
  \begin{enumerate}
    \item $\abs{I_0} = O(\sqrt{n})$ and $\abs{I_j} = k$ for $j \ge 1$.
    \item For every $I_j$ with $j \ge 1$, the output of $\mathcal{A}$ on the
    inside of an $I_j$ segment (in the preceding sense) is $\ell_{I_j}$.
  \end{enumerate}
  Additionally, since $r = \abs{\Gamma}$ is constant, we can assume without
  restriction that the output $\ell_{I_j}$ of $\mathcal{A}$ in every $I_j$ with
  $j \ge 1$ is not only fixed with respect to each $I_j$ but in fact an
  identical $\ell \in \Gamma$ across all the $I_j$'s.
  (E.g., we can blow up $n$ by a factor $r$ and then use an averaging argument.
  Since $r$ is constant, this does not interfere with the asymptotic estimates.)

  Finally, we construct a cycle of $n$ nodes by splitting it into $I_j$ segments
  as defined by the partition $I_0 + \dots + I_m$ and then placing identifiers
  in each $I_j$ segment in ascending order.
  By the previous reasoning, for every $I_j$ with $j \ge 1$, we know that the
  output of $\mathcal{A}$ is the same fixed label $\ell$ in all but $O(s)$ many
  nodes of $I_j$.
  (More specifically, these are the nodes at the boundaries between distinct
  segments and that see identifiers from $I_{j-1}$ and $I_j$ simultaneously.)
  Since there are $m$ such sets in total, there are only $\abs{I_0} + O(ms) =
  o(n)$ nodes that do not output $\ell$.
\end{proof}

\subsection{Upper Bounds}

Finally we show upper bound results, that is, how each of the classes of
\cref{table:complexclasses} corresponds to obtaining an approximation to a
desired factor $\alpha$.

\begin{lemma}\label{lem:bconstcomplexity}
  If $\alpha\bopt \geq \bconst$ holds for some min opt LCL $\Pi$, then the
  complexity class of an algorithm that finds an $\alpha$-approximation of $\Pi$
  is A.
\end{lemma}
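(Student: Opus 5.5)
We give the trivial constant-output algorithm and check that it is an $\alpha$-approximation. Since $\alpha\bopt \ge \bconst$ presupposes $\bconst \neq \bot$, the graph $\Gconst$ is nonempty. So there is a node $x = (s_1,\dots,s_{r+1})$ of the de Bruijn graph with a self-loop and $c(x) = \bconst$ (taking the cheapest self-loop, as in the definition of $\bconst$). A node $(s_1,\dots,s_{r+1})$ has an arc to itself only if $s_i = s_{i+1}$ for all $i \in [r]$, so $x = (a,a,\dots,a)$ for a single label $a \in \Gamma$. The algorithm is: every node outputs $a$, without communication. This takes $0$ rounds and is deterministic, so it also works in \randlcl with success probability $1$.

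\emph{Validity and cost.} For every $n$ and every node $v$, we have $s(N_r(v)) = (a,\dots,a) = x$. Hence $c(s(N_r(v))) = \bconst \neq \bot$, and the solution is valid. For $\aggr = \sum$ the total cost is $\bconst n$. For $\aggr = \max$ it is $\bconst$.

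\emph{Approximation ratio.} We need $\opt(n) \ge \bopt n$ (sum case) or $\opt(n) \ge \bopt$ (max case) for every $n$. Take any valid solution on an $n$-cycle. The sequence of neighborhoods $s(N_r(v_0)), s(N_r(v_1)),\dots$ is a closed walk of length $n$ in $G$ (with $\bot$-nodes pruned), and its value equals the walk's cost (times $n$ in the sum case). Since the walk is closed, repeatedly splitting it at a repeated node, exactly as in the proofs of \cref{lem:wopt-is-poly,lem:cycle-vs-walk}, decomposes it into cycles.
\begin{itemize}
\item In the sum case, the average cost is a weighted average of the cycles' average costs, each at least $\bopt$ by definition. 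Hence the total is at least $\bopt n$.
\item In the max case, the max over the walk is at least the cost of any constituent cycle, which is at least $\bopt$.
\end{itemize}
Therefore the algorithm's value is $\bconst n \le \alpha\bopt n \le \alpha\,\opt(n)$ (resp.\ $\bconst \le \alpha\,\opt(n)$). This holds for all $n$, not just asymptotically.

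The only mildly delicate step is the inequality $\opt(n) \ge \bopt n$ for every $n$. It requires the decomposition of a closed walk into simple cycles, together with the observation that a sum-average over a union of cycles is a convex combination of the cycle averages. Everything else is immediate. Maximization problems follow symmetrically, as stated at the start of \cref{sec:alg}.
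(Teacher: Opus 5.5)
Your proof is correct and follows the same route as the paper: in $0$ rounds, every node outputs the label $a$ of the cheapest self-loop node $(a,\dots,a)$. You additionally prove $\opt(n)\ge\bopt n$ (resp.\ $\opt(n)\ge\bopt$) for every $n$ by decomposing closed walks into cycles, a step the paper uses without proof. The paper's extra appeal to \cref{lem:ramseyconst} only concerns the converse regime $\alpha\bopt<\bconst$ and is not needed for the stated implication.
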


\begin{proof}
$\alpha\bopt \geq \bconst$ implies there exists a constant solution that approximates the best possible solution for any $n$ sufficiently, as there is a self-loop in the de Bruijn graph. An algorithm finding the $\alpha$-approximation of $\Pi$ can output a constant solution in $O(1)$ in \detlcl and \randlcl. Together with \cref{lem:ramseyconst} this concludes the proof.
\end{proof}

\begin{lemma}
  Suppose $\alpha\bopt < \bconst$ for some min sum LCL $\Pi$ and additionally one of the following two conditions holds:
  \begin{enumerate}
    \item $\alpha\bopt \ge \bgap$ and $\dgap$ is false 
    \item $\alpha\bopt > \bgap$ and $\dgap$ is true 
  \end{enumerate}
  Then the complexity class of any algorithm finding the $\alpha$-approximation
  of $\Pi$ is B.
\end{lemma}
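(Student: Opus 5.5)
Class B means $\Theta(\log^* n)$ rounds in \detlcl and $O(1)$ rounds in \randlcl, so the proof has four parts: a deterministic upper bound, a randomized upper bound, and matching lower bounds for each.

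\emph{Lower bounds.} The deterministic $\Omega(\log^* n)$ bound is immediate from \cref{lem:ramseyconst}: since $\alpha\bopt < \bconst$, no $o(\log^* n)$-round \detlcl algorithm can $\alpha$-approximate $\Pi$. Randomized $O(1)$ is trivially optimal, so no randomized lower bound is needed.

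\emph{Deterministic upper bound.} This step is only a convenience. Condition 1 or 2 implies that $\alpha\bopt \ge \bgap$ with the matching strictness pattern. Since $\Ggap \subseteq \Gflex$, we have $\bflex \le \bgap$, so we can simply run the randomized algorithm described next after derandomizing its segmentation. Concretely, in $O(\log^* n)$ rounds we compute, via Cole--Vishkin, a ruling set whose gaps lie in $[A,B]$ for constants $A,B$. Each gap is then filled with a cheap walk in a component of $\Ggap$; these components are flexible, so every sufficiently long length can be realized.

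\emph{Randomized $O(1)$ upper bound.} This is the main construction. Fix a component $H$ of $\Ggap$ that attains $\bgap$, and fix a self-loop node $x\in H$. If $\dgap$ is false, choose $x$ with self-loop cost $\bgap = \bconst$; then the all-$x$ solution already gives a $\bgap/\bopt$-approximation, and we are done. (This also shows that condition 1 reduces to case A, or is absorbed into condition 2.) So assume $\dgap$ is true and $\alpha\bopt > \bgap$, and set $\epsilon = \alpha\bopt - \bgap > 0$.

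Take a cycle $Z$ in $H$ of average cost $\bgap$ and length $L\le\gamma$. Let $P$ and $Q$ be connecting walks $x\to Z$ and $Z\to x$, each of length at most $\gamma$. Each node independently becomes a marker with probability $p$. A marker at distance $d$ from the next marker fills its segment as follows: if $d \le D$ for a large constant $D$, it uses $P$, then many copies of $Z$, then $Q$, and pads with the self-loop at $x$ to match the exact length $d$. If $d > D$, it writes $x$ constantly, after an initial portion that the marker can see within radius $D$; rather than locating the far marker, it outputs $P Z^{\lfloor (D-2\gamma)/L\rfloor} Q$ followed by $x$'s. A node more than $D$ steps after a marker can see that there is no marker within $D$ behind it and outputs $x$. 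All segments begin and end at $x$, so the resulting walk is valid.

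The costs are then as follows. A short segment costs $\bgap d + O(\gamma\bconst)$, since the padding and connectors are bounded independently of the choice of $D$ as long as segments are $\gg \gamma L$. A long segment costs at most $\bgap D + \bconst(d-D) + O(\gamma)$. Choosing $p$ small makes the overhead per segment $O(\gamma\bconst)\cdot pn = \epsilon n/3$; the segments shorter than $2\gamma+L$ are forced into the constant case, which costs at most $\bconst\cdot O(\gamma)$ each and has frequency $O(p\gamma)$. Choosing $D \gg 1/p$ then makes the expected mass of nodes in the tails of long segments at most $e^{-pD}n\bconst \le \epsilon n/3$. The total is therefore $(\bgap+\epsilon)n \le \alpha\bopt n \le \alpha\,\opt(n)$, using $\opt(n) \ge \bopt n$.

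\emph{Main obstacle.} The hard part is converting this expectation bound into a w.h.p.\ statement. For large $n$, the segment structure has bounded dependency: each node's contribution depends only on markers within distance $O(D)$. Chernoff bounds for bounded-dependence sums therefore give concentration within $o(n)$ with probability $1-1/n$. Small $n$, below a constant threshold, is handled by brute force in $O(1)$ rounds.
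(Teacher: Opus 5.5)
Your proof is correct and takes the same route as the paper. For the deterministic bound you use Cole--Vishkin segmentation, filling each gap with cheap closed walks through a self-loop node of $\Ggap$; for the randomized $O(1)$ bound you use sparse random markers and pad the rare long gaps with the self-loop. You work this out in more detail than the paper: independent markers plus bounded-dependence concentration make precise the paper's sketch via an ``$O(1)$-round ruling set'', and you note that condition~1 cannot hold together with $\alpha\bopt<\bconst$. For the deterministic $\Omega(\log^* n)$ lower bound you invoke \cref{lem:ramseyconst}, which is the relevant lemma here; the paper's proof cites \cref{lem:symmbreak} instead.
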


\begin{proof}
As the definitions of $\bgap$ and $\dgap$ are simply more extensive than $\bflex$ and $\dflex$, for \detlcl we can proceed like in \cref{lem:blfexcomplexity}. For \randlcl it is known that we can find a ruling set in $O(1)$ with high probability \cite{balliu-ghaffari-etal-2022-node-and-edge-averaged}. With a ruling set and $K$, like in \cref{lem:blfexcomplexity}, constant sequences can be labeled. Any super constant sequences can be labeled with the self-loop included in the definition of $\bgap$. With high probability the cycle will rarely resort to the cost of the self-loop even if it is higher than $\bgap$, so it is negligible on sufficiently large instances. Together with \cref{lem:symmbreak} this concludes the proof.
\end{proof}

The connection between the definition of $\bconst$ and a solution of an opt LCL in a directed cycle is comparably easily made. It is less obvious how the cheapest cycle in a flexible component is the deciding cost for finding labelings in complexity class C. Therefore we make a connection to \cref{sec:problem-param} with \cref{lem:flex-cycles} leading up to \cref{lem:blfexcomplexity}.

\begin{lemma}\label{lem:flex-cycles} 
  If $\bflex \neq \bot$ holds for a min sum LCL $\Pi$, then we can pick $K \geq K_0 \in \N_+$ so that the de Bruijn graph of $\Pi$ admits a closed
  walk $W$ of length $K$ for which $c(W) \leq \bflex + o(1)$.
\end{lemma}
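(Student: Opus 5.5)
The plan is to build $W$ explicitly out of a cheapest cycle in $\Gflex$ and a pair of coprime closed walks through a node on it. Since $\bflex \neq \bot$, fix a cycle $C^\ast$ in a flexible component $S$ of $\Gflex$ with $c(C^\ast) = \bflex$, and let $p = |C^\ast| \le \gamma$. Pick a node $v$ on $C^\ast$. Every node of $S$ is flexible, so by \cref{lem:coprime-cycles-equals-flexible-components} the node $v$ lies on two closed walks $A$ and $B$ of coprime lengths $a,b$. By \cref{lem:2gamma-walks-flexibility} we may assume $a,b \le 2\gamma+1$. Rotate $C^\ast$ so that it becomes a closed $v$ walk. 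The costs $c(A)$ and $c(B)$ are finite; the only fact we need about them is that they are bounded by $M = \max_{u \in S} c(u)$.

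Now take a target length $K$. Write $K = qp + t$ with $0 \le t < p$. Let $K' = (a-1)(b-1)$ be the Frobenius bound; it is a constant depending only on $\Pi$. Then reduce: put $q' = q - \lceil K'/p \rceil - 1$, so that the remainder $K - q'p$ lies in $[K', K'+2p]$. This remainder is therefore expressible as $xa + yb$ with $x,y \ge 0$ and $xa+yb \le K'+2p = O(\gamma^2)$. Define $W$ as $q'$ traversals of $C^\ast$ followed by $x$ traversals of $A$ and $y$ traversals of $B$. All of these are closed $v$ walks, so their concatenation is a closed $v$ walk of length exactly $K$. This works for every $K \ge K_0 := p(\lceil K'/p\rceil + 2)$.

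Next I would bound the cost. Because the cost ignores $w_0$, and the pieces are concatenated at $v$, the sum of node costs of $W$ is exactly the sum over its pieces. This gives
\[
  c(W) \le \frac{q' p\,\bflex + (K'+2p) M}{K} \le \bflex + \frac{(K'+2p)M}{K},
\]
which is $\bflex + O(1/K) = \bflex + o(1)$ as $K \to \infty$. This is the claim, read as: for every $\epsilon>0$ there is $K_0$ such that all $K \ge K_0$ admit such a walk with $c(W) \le \bflex + \epsilon$. Under this reading, the $o(1)$ is in $K$.

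The main obstacle I expect is only bookkeeping. One point is making the walks share a common base node so that concatenation is legal; choosing $v$ on $C^\ast$ and using \cref{lem:coprime-cycles-equals-flexible-components} at that $v$ handles this. The other point is ensuring that the nonnegative combination covers all large residues; the Frobenius bound handles this. I would also remark that when $\dflex$ is false, one can take $A$ and $B$ of cost exactly $\bflex$, and then $c(W) = \bflex$ exactly. This is the refinement used later in \cref{lem:blfexcomplexity}.
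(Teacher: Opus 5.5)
Your proposal is correct and follows essentially the paper's argument: fix a cheapest cycle $C^\ast$ in a flexible component, use flexibility at a node of $C^\ast$ to cover every residue modulo $|C^\ast|$ with a closed walk of bounded length, and pad with copies of $C^\ast$, so the excess cost is $O(1/K)$. One harmless slip: the remainder $K - q'p = t + (\lceil K'/p\rceil + 1)p$ can be as large as $K'+3p-2$, not at most $K'+2p$; only its boundedness matters, and likewise the appeal to \cref{lem:2gamma-walks-flexibility} is unnecessary because $a,b$ are fixed constants of $\Pi$ anyway.
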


\begin{proof}
Since $\bflex \neq \bot$ there exists at least one cycle through which $\bflex$ can be obtained. Let $C$ be one of these cycles. Because $C$ is in a flexible component $S$, we can find a closed walk $W'$ starting and ending in a node belonging to $C$, so there exists a $K'_0$ so that for all $K' \geq K_0':\text{ }\abs{W'} = K'$. This is achieved by connecting two closed coprime walks in $S$  to $C$. Now we compose a walk $W$ of length $K$ by finding a $K'$ so that $K-K'$ is a multiple of $C$. The amount of different walks we have to find to achieve this is clearly upper bounded by $C$. For large $K'$ (in summing problems) the ratio of the cost $c(C) = \bflex$ and $W'$ is improving. The worst cost ratio can be fixed by choosing $K_0$ large enough, where the potential additional cost impact of nodes belonging to $K'$ is $o(1)$ compared to the amount of nodes in $K$.
\end{proof}

\begin{lemma}\label{lem:blfexcomplexity}
  Suppose one of the following two conditions hold for some min sum LCL $\Pi$:
  \begin{enumerate}
    \item $\alpha\bopt \geq \bflex$ and $\dflex$ is false.
    \item $\alpha\bopt > \bflex$ and $\dflex$ is true.
  \end{enumerate}
  
  Then the complexity class of any algorithm finding the $\alpha$-approximation of $\Pi$ is C, assuming $\Pi$ does not belong to either class A or B.
\end{lemma}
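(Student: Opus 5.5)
Class C means $\Theta(\log^* n)$ rounds in both \detlcl and \randlcl, so the proof splits into an upper bound and a lower bound.

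\textbf{Upper bound.} I will give a deterministic $O(\log^* n)$-round algorithm; it runs verbatim in \randlcl. Fix a node $x$ of $\Gflex$ on a cycle $C$ with $c(C)=\bflex$. The cycle $C$ lies in a flexible component, so by \cref{lem:coprime-cycles-equals-flexible-components} there is some $K_0$ such that closed $x$ walks exist of every length $K\ge K_0$.

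The algorithm first computes a ruling set in $O(\log^* n)$ rounds, for example via Cole--Vishkin $3$-coloring followed by greedy selection. This splits the cycle into segments of length between $A$ and $B=2A$, where $A\ge K_0$ is a constant. At each segment boundary we output the $(r+1)$-tuple $x$, and we fill each segment of length $L\in[A,B]$ with a closed $x$ walk of length exactly $L$. Consecutive walks are glued at $x$, so the labeling is a closed walk in the de Bruijn graph and hence a valid solution.

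For the cost, there are two cases.
\begin{itemize}
\item If $\dflex$ is false, take $x$ to be the shared node of two closed walks $W_1,W_2$ of coprime lengths $a,b$, each of cost exactly $\bflex$. Every $L\ge (a-1)(b-1)$ can be written as $pa+qb$ with $p,q\ge 0$, and concatenating $p$ copies of $W_1$ and $q$ copies of $W_2$ gives a walk of cost exactly $\bflex L$. The total cost is then exactly $\bflex n\le\alpha\bopt n\le\alpha\opt(n)$, using $\opt(n)\ge\bopt n$ (see the remark after \cref{def:b-d-parameters}).
\item If $\dflex$ is true, we only have $\alpha\bopt>\bflex$. \Cref{lem:flex-cycles} gives, for every large $K$, a closed walk of length $K$ with average cost at most $\bflex+o(1)$ as $K\to\infty$. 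Concretely, a walk of length $L$ consists of repetitions of $C$ plus a bounded-length correction walk of length $O(|C|+K_0)$. Choose $A$ so large that this overhead makes the average at most $\bflex+\epsilon$, where $\epsilon=\alpha\bopt-\bflex>0$. The total cost is then at most $(\bflex+\epsilon)n=\alpha\bopt n$.
\end{itemize}

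\textbf{Lower bound.} By hypothesis $\Pi$ is not in class A or B. To get $\Omega(\log^* n)$ in \randlcl, and hence also in \detlcl, I would argue as follows. Since $\Pi$ is not in class B, neither sufficient condition for B holds, so either $\alpha\bopt<\bgap$, or $\alpha\bopt=\bgap$ with $\dgap$ true. In both cases the contrapositive of \cref{lem:symmbreak} rules out any $o(\log^* n)$ randomized algorithm. Since $\Pi$ is not in class A, we also have $\alpha\bopt<\bconst$, and \cref{lem:ramseyconst} independently gives the deterministic bound. Finally, $\alpha\bopt\ge\bflex$ means we are not in class D, which is consistent with \cref{lem:flexexpect}.

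\textbf{Main obstacle.} The hardest step is the case where $\dflex$ is true. There we must make the $o(1)$ overhead of \cref{lem:flex-cycles} uniform over all segment lengths in $[A,2A]$, with a constant $A$ depending only on $\epsilon$. My plan is to fix the correction walks once, with length at most $|C|+K_0$ for each residue modulo $|C|$, and bound their cost by $\max c$ times that length. This should make the averaging argument go through.
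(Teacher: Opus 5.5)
Your proof is correct and follows the same route as the paper. You segment the cycle with an $O(\log^* n)$-round ruling set, fill each segment with a closed walk through a flexible node lying on a cheapest flexible cycle, and take the $\Omega(\log^* n)$ bound from the hypothesis that $\Pi$ is not in class A or B. Your version is tighter than the paper's sketch in two places: in the $\dflex$-false case you concatenate the two coprime cost-$\bflex$ walks to get cost exactly $\bflex n$ (the $o(1)$ slack of \cref{lem:flex-cycles}, which the paper invokes there, would not suffice when $\alpha\bopt=\bflex$), and you derive the randomized lower bound explicitly from \cref{lem:symmbreak} via the contrapositives of the class A and B lemmas, whereas \cref{lem:flexexpect}, which the paper cites, only separates class C from class D.
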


\begin{proof}
The first of the two conditions from the lemma statement together with \cref{lem:flex-cycles} implies that there exists a constant $K$ for which any path of length $\geq K$
can be labeled in a way that this solution approximates the best possible solution for any $n$ sufficiently. A $3$-coloring can be found in $\Theta(\log^{*} n)$
\cite{cole-vishkin-1986-deterministic-coin-tossing-with,linial-1992-locality-in-distributed-graph-algorithms}.
Then this coloring can be used to find sequences of length at least $K$ which can then be labeled to produce a sufficient approximation in each sequence of the graph and therefore also in the whole graph. 
If the second condition from the claim holds, choose $n$ and therefore $K$ big enough so that $\alpha\bopt > \bflex$. Together with \cref{lem:flexexpect} this concludes the proof.
\end{proof}

\begin{lemma}
If $\alpha\bopt \geq \bcoprime$ for some min-max LCL $\Pi$ the complexity class of any algorithm finding the $\alpha$-approximation of $\Pi$ is C, assuming $\Pi$ does not belong to class A.
\end{lemma}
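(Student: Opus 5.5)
The proof needs an upper bound and a lower bound. The upper bound is an $O(\log^* n)$-round deterministic algorithm (which randomized algorithms can also run). The lower bound is $\Omega(\log^* n)$ in both models, given that $\Pi$ is not in class A.

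\emph{Upper bound.} By \cref{def:bcoprime}, the subgraph $G_\le(\bcoprime)$ contains a node $x$ that lies on two closed walks $A$ and $B$ of coprime lengths $a$ and $b$. Every node of $G_\le(\bcoprime)$ has cost at most $\bcoprime \le \alpha\bopt$. By \cref{lem:coprime-cycles-equals-flexible-components}, $x$ is $K$-flexible inside $G_\le(\bcoprime)$ with $K=(a-1)(b-1)$. By \cref{lem:2gamma-walks-flexibility}, we may take $a,b\le 2\gamma+1$, so $K$ is a constant.

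The algorithm proceeds as follows.
\begin{enumerate}
  \item Compute a $3$-coloring with Cole--Vishkin in $O(\log^* n)$ rounds.
  \item From the coloring, derive a ruling set whose consecutive gaps lie between $K$ and $2K$ (or some constant bound), in $O(1)$ further rounds.
  \item At each ruling-set node, start a copy of the label sequence encoded by $x$.
  \item Fill each gap of length $\ell\ge K$ with a closed $x$-walk of length exactly $\ell$, built from copies of $A$ and $B$.
\end{enumerate}
Each node can compute its own label by looking at its segment, which has constant length.

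The resulting labeling is a closed walk in $G_\le(\bcoprime)$. It is therefore valid, and its maximum cost is at most $\bcoprime \le \alpha\bopt \le \alpha\,\opt(n)$. The last inequality holds because $\opt(n)$, for min-max, is the maximum over some closed walk, which is at least the cheapest cycle value $\bopt$ by \cref{lem:cycle-vs-walk}. Cycles with $n<K$ are handled by brute force in $O(1)$ rounds.

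\emph{Lower bound.} Since $\Pi$ is not in class A, \cref{lem:bconstcomplexity} gives $\alpha\bopt < \bconst$. Deterministically, \cref{lem:ramseyconst} then gives $\Omega(\log^* n)$ directly.

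For the randomized model I would adapt \cref{lem:symmbreak}. I would not use $\bgap$ here. Instead, the argument is that any $\alpha$-approximation may only use nodes of cost at most $\alpha\bopt$. Such a node cannot be a self-loop node (that is, a constant $(r+1)$-tuple), because every self-loop has cost at least $\bconst > \alpha\bopt$; this holds for any self-loop in the support, not just those in flexible components. A feasible output therefore contains no constant window of length $r+1$. From that, a ruling set and then a $3$-coloring can be extracted in $O(1)$ rounds. Naor's lower bound then gives $\Omega(\log^* n)$ for randomized algorithms as well.

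The main obstacle is this self-loop step. I need $\bconst$ in the min-max setting to bound the cost of every self-loop in $G$, not only those inside $\Ggap$. I would check this against the min-max computation in \cref{lem:minsum-maxsum-are-poly}. If $\bconst$ turns out to be defined only on $\Ggap$, I would argue instead that a self-loop node of cost at most $\alpha\bopt$ outside a flexible component cannot appear in $o(n)$-round outputs. The reason is that the output must lie in $G_\le(\alpha\bopt)$ and, by the argument of \cref{lem:lowercoprime}, must stay within its flexible part. Any self-loop there lies in a flexible component, so its cost is at least $\bconst$.
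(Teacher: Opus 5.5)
Your upper bound is the paper's argument. The paper's proof reruns the first case of \cref{lem:blfexcomplexity} with $K$ taken from the coprime closed walks in $G_\le(\bcoprime)$, and your check that every node used has cost at most $\bcoprime\le\alpha\bopt\le\alpha\opt(n)$ is exactly what is needed.

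Where you differ is the lower bound. The paper cites only \cref{lem:lowercoprime}, which fixes the boundary with class D rather than giving $\Omega(\log^* n)$. It leaves the hypothesis ``not in class A'' to carry the $\Omega(\log^* n)$ part. In particular, it never argues why a min-max problem cannot fall into class B: its \cref{lem:symmbreak} is phrased via $\bgap,\dgap$, which are sum-only parameters. Your route makes this explicit:
\begin{itemize}
\item not A gives $\alpha\bopt<\bconst$ by \cref{lem:bconstcomplexity};
\item \cref{lem:ramseyconst} then gives the deterministic bound;
\item for the randomized bound, a maximum cost below $\bconst$ forbids every constant $(r+1)$-window, so the pair (label, position within its run of length at most $r$) is an $O(1)$-coloring, which reduces to a $3$-coloring in $O(1)$ rounds, and Naor's bound applies.
\end{itemize}
This also explains why min-max has no class B: unlike with sums, a single expensive constant stretch cannot be amortized.

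A few repairs are needed.
\begin{itemize}
\item The claim that ``any $\alpha$-approximation may only use nodes of cost at most $\alpha\bopt$'' holds only when $\opt(n)=\bopt$. Restrict the randomized argument to $n$ divisible by the length of a cycle attaining $\bopt$. There are infinitely many such $n$, and they suffice for Naor's bound.
\item Your ``main obstacle'' is vacuous. A node with a self-loop has closed walks of every length, so it is $1$-flexible; its component is therefore flexible and contains a self-loop, so the node lies in $\Ggap$ and hence in $\Gconst$. Every self-loop of $G$ thus has cost at least $\bconst$, and you can drop the fallback through \cref{lem:lowercoprime}.
\item A ruling set with gaps at least $K$ is not, in general, an $O(1)$-round consequence of a $3$-coloring. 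However, a maximal independent set in the $K$-th power of the cycle can be computed directly in $O(\log^* n)$ rounds, so your upper bound is unaffected.
\end{itemize}
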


\begin{proof}
The proof is almost equal to the proof concerning the first condition in \cref{lem:blfexcomplexity}. Simply the way to arrive at the existence of the constant $K$ for which any path of length $\geq K$ can be labeled according to requirements is here directly rooting in the definition of $\bcoprime$. Together with \cref{lem:lowercoprime} this concludes the proof.
\end{proof}

\begin{lemma}
If $\bflex = \bot$ for a min-sum LCL $\Pi$ or $\bcoprime = \bot$ for a min-max LCL $\Pi'$ then it has no general solution and
its complexity class is E. If the $\alpha$-approximation algorithm is not in any
of the classes A, B, C or E, then it is in the class D.
\end{lemma}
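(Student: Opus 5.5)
The statement has two parts, and I would prove them separately: the unsolvability claim (class E), and the claim that every remaining case falls into class D.

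\textbf{Class E.} Suppose $\bflex = \bot$, so $\Gflex$ contains no cycles, i.e., no strongly connected component of the pruned de Bruijn graph is flexible. Any valid solution on an $n$-cycle is a closed walk of length $n$ in $G$. Such a walk stays inside one strongly connected component $S$, since a closed walk cannot leave a component and come back. If $S$ is not flexible, \cref{lem:coprime-cycles-equals-flexible-components} says that no node of $S$ lies on two closed walks of coprime lengths. I would then use the standard fact that in a strongly connected digraph the gcd of all closed-walk lengths through a node is the same for every node. It follows that all closed walks in $S$ have lengths divisible by some $d_S \ge 2$.

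Now take $n$ to be a prime larger than every $d_S$, or more simply any $n$ coprime to $\prod_S d_S$. Then no component supports a closed walk of length $n$, so no valid solution exists for that $n$. Hence no algorithm, in any number of rounds, solves $\Pi$ for all $n$, and the problem is in class E.

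For min-max problems with $\bcoprime = \bot$ the argument is the same. With the threshold set to the largest finite cost, $G_\le(\cdot)$ is all of $G$. By definition of $\bcoprime$, $G$ then has no node shared by two closed walks of coprime lengths, and the divisibility argument again gives infinitely many $n$ with no valid labeling.

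\textbf{Class D.} Assume $\alpha$ is not covered by classes A, B, C or E. For min-sum this leaves exactly $\alpha\bopt < \bflex$, or $\alpha\bopt = \bflex$ with $\dflex$ true. For min-max it leaves $\alpha\bopt < \bcoprime$.

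For the \emph{upper bound}, every node gathers the whole cycle in $O(n)$ rounds, learns $n$ and the identifiers, and all nodes deterministically compute the same optimal solution by brute force. Such a solution exists because we are not in class E; I would note that for min-sum this still needs some valid solution for every $n$, and that follows from a flexible component existing once $\bflex \neq \bot$. Each node then outputs its own label, which gives a $1$-approximation.

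For the \emph{lower bound}, I take the contrapositive of \cref{lem:flexexpect} in the sum case and of \cref{lem:lowercoprime} in the min-max case. Any $o(n)$-round randomized algorithm, and hence any deterministic one, would force $\alpha\bopt \ge \bflex$ with strict inequality when $\dflex$ is true, respectively $\alpha\bopt \ge \bcoprime$. Both contradict the assumed case, so the complexity is $\Theta(n)$ in both models.

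The main obstacle is the E-part: getting the gcd/periodicity argument right in the pruned de Bruijn graph. The delicate point is the step from ``no node lies on two coprime closed walks'' to ``all closed walks in the component share a common divisor $d_S \ge 2$''. I would handle it by taking $d_S$ to be the gcd of all cycle lengths in $S$. Any closed walk decomposes into cycles, so its length is a multiple of $d_S$. Conversely, if $d_S = 1$, the cycles through different nodes can be concatenated via connecting paths into two closed walks of coprime lengths at a single node, which would make the component flexible.
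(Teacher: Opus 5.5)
Your proposal is correct and follows the paper's route. Class E holds because, with no flexible component, every strongly connected component of the pruned de Bruijn graph that contains a cycle has period $d_S \ge 2$, so infinitely many $n$ admit no valid labeling. Class D follows from brute force in $O(n)$ rounds together with the $o(n)$ lower bounds of \cref{lem:flexexpect,lem:lowercoprime}. You make explicit what the paper's short proof only asserts: the periodicity argument, the upper bound, and the boundary case $\alpha\bopt = \bflex$ with $\dflex$ true, which you correctly place in D via the strictness clause of \cref{lem:flexexpect}. One phrase is loose: a flexible component yields valid solutions only for all sufficiently large $n$, not literally every $n$, though the paper relies on the same implicit convention.
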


\begin{proof}
If $\bflex = \bot$ (or equivalently $\bcoprime = \bot$), then the problem does not have a valid solution on all instances.
Otherwise, we know that $\alpha\bopt n < \bflex n + O(1)$ ($\alpha\bopt < \bcoprime$) holds for infinitely
many $n$, so on certain instances the whole graph must be known to all nodes in
order to obtain an $\alpha$-approximation.
\end{proof}

\bibliographystyle{plainurl}
\bibliography{da}

\appendix

\section{Connection of flexible components and walks of coprime lengths in digraphs}\label{apdix:flex-comprime}

\begin{lemma}\label{lem:2gamma-walks-flexibility}
Let $G=(V,A)$ be a strongly connected digraph of $n$ vertices. If $G$ has flexible vertices, then there exists a node $v\in V$, and two walks $A',B'$ containing $v$ such that $|A'|$ and $|B'|$ are coprime and of length $\leq 2n + 1$. 
\end{lemma}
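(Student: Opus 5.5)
Since $G$ is strongly connected, flexibility of one vertex implies flexibility of every vertex. The plan is to reduce the statement to a gcd argument on short cycles. Let $g$ be the gcd of the lengths of all closed walks in $G$.

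\textbf{Step 1: $g = 1$.} Some vertex $v$ is flexible, so it has closed walks of lengths $k$ and $k+1$ for large $k$. Hence $g = 1$.

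\textbf{Step 2: $g$ is attained on simple cycles.} I will show that $g$ equals the gcd of the lengths of the simple cycles of $G$, each of which has length at most $n$. Every closed walk decomposes into simple cycles: repeatedly split at a repeated vertex. Its length is therefore a sum of simple-cycle lengths, so the gcd $g'$ of simple-cycle lengths divides every closed-walk length. Thus $g' \mid g$, and $g \mid g'$ holds trivially. So $g' = 1$.

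\textbf{Step 3: reduce to two cycles.} This is the main obstacle. I need a single vertex lying on two closed walks of coprime lengths, each of length at most $2n+1$.

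I would first try the standard argument from \cite{chang-studeny-suomela-2023-distributed-graph-problems}. Take a shortest simple cycle $C_1$, with length $\ell \le n$ and base vertex $v$. Since $g' = 1$, some simple cycle $C_2$ has length $m$ with $\gcd(\ell, m) < \ell$; otherwise $\ell$ would divide all cycle lengths, forcing $\ell = 1$, a self-loop, which is trivially fine.

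Let $u$ be a vertex on $C_2$. By strong connectivity, choose shortest paths $P$ from $v$ to $u$ and $Q$ from $u$ to $v$. Then consider the closed $v$-walks
\[
A' = P\,Q \quad\text{and}\quad B' = P\,C_2\,Q ,
\]
whose lengths differ by $m$.

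The difficulty is keeping the lengths at most $2n+1$ while achieving coprimality. Two ideas help here. First, use BFS layering modulo $g$ to argue that the period can be certified by short cycles relative to a fixed vertex. Concretely, define the potential $d(x)$ as the distance from $v$ to $x$. Every edge $(x,y)$ then satisfies $d(x) + 1 - d(y) \equiv 0 \pmod{g}$. Second, note that since $g = 1$, some edge $(x,y)$ has $d(x) + 1 - d(y) = t$ with $\gcd$ considerations yielding period $1$.

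Now form two closed $v$-walks. The first goes $v \to y$ along the BFS tree, then back to $v$ via a shortest path. The second goes $v \to x$ along the tree, uses the edge $(x,y)$, then returns from $y$ to $v$ via the same shortest path. Their lengths differ by $t = d(x) + 1 - d(y)$, and each is at most $2n + 1$: a tree path of length at most $n - 1$, plus one edge, plus a return path of length at most $n - 1$.

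Coprimality needs care. A single edge gives a difference $t$, but possibly several edges are needed for the gcd of the differences to be $1$. To handle this, I would pick the return path so that one walk has length $L$ and the other $L + t$. I would then choose an edge with $t$ coprime to $L$. If no single edge works, I would fall back to the flexibility results in \cite{chang-studeny-suomela-2023-distributed-graph-problems} and cite their bound directly.
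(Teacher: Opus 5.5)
Steps 1 and 2 are fine, but there is a genuine gap exactly where you flag the main obstacle: Step~3 never produces two closed walks of coprime length through a common vertex.

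\begin{itemize}
\item Your first construction gives closed $v$-walks of lengths $|P|+|Q|$ and $|P|+|Q|+m$. Their gcd is $\gcd(|P|+|Q|,m)$, and nothing forces this to be $1$. Moreover, $|P|+|Q|+m$ can reach $3n-2$.
\item The BFS construction is sound as far as it goes. For every arc $e=(x,y)$ it yields closed $v$-walks of lengths $L_y$ and $L_y+t_e$, all at most $2n-1$, and $\gcd_e t_e$ equals the period, i.e.\ $1$. But a family with gcd $1$ need not contain a coprime pair (think of $6,10,15$). So there may be no single arc with $\gcd(L_y,t_e)=1$.
\item For $y=v$ your first walk has length $L_v=0$, which is not a closed walk at all.
\item Deferring to a citation for precisely this step is not a proof.
\end{itemize}

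In fact the step cannot be repaired, because the coprime-pair conclusion with bound $2n+1$ is false in general. Here is a counterexample.

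\begin{itemize}
\item Glue three directed cycles of lengths $a\le b\le c\le 1.1a$ at a single vertex. Then $n=a+b+c-2$, and every closed walk has length $ia+jb+kc$.
\item Since $2n+1<7a$, every closed walk of length at most $2n+1$ has length $ia+jb+kc$ with $1\le i+j+k\le 6$.
\item By the Chinese remainder theorem, choose $(a,b,c)$ as follows: for each non-proportional pair $x,y$ of these coefficient vectors, require $(a,b,c)\equiv x\times y$ modulo its own prime $p_{x,y}>36$. Take representatives in a window $[N,1.1N]$ with $N$ much larger than the modulus, and with $\gcd(a,b,c)=1$. This is possible because $x\times y\not\equiv 0 \pmod{p_{x,y}}$.
\item Since $x\cdot(x\times y)=y\cdot(x\times y)=0$, the two corresponding lengths share the factor $p_{x,y}$. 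Proportional pairs share a factor trivially.
\end{itemize}

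So no vertex lies on two coprime closed walks of length at most $2n+1$, although the graph is strongly connected and aperiodic, so every vertex is flexible.

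What does hold is the gcd version: the closed $v$-walks of bounded length have lengths with gcd $1$. Your BFS argument actually proves this, with bound $2n-1$.

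The paper reaches that same gcd statement by a different route. By pigeonhole (some $v^*\ne v$ is visited three times, or $v$ four times), every closed $v$-walk length $k_0\ge 2n+2$ equals $k_1+k_2+k_3$ or $-k_1+k_2+k_3$ for strictly shorter closed $v$-walk lengths, so $\gcd(L')=\gcd(L)=1$. Its last sentence then asserts a coprime pair, which is the same unjustified leap you were trying to make. The gcd version is also all that the paper's flexibility test actually needs.
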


\begin{proof}
  Suppose that $G$ is flexible. Let $v$ be a fixed node and $A$, $B$ be two closed walks containing $v$ such that $|A|$ and $|B|$ are coprime.  

Let $L =\{k \in \mathbb{N} \mid \exists W \in G \wedge |W| = k\}$ and $L' = L \cap [1, 2n+1]$. Per \cref{lem:coprime-cycles-equals-flexible-components}, it is clear that $\gcd(L)=1$ as $|A|$ and $|B|$ are coprime. 

We prove that $\gcd(L) = \gcd(L')$. Let $k_0 \in L\backslash L'$. We prove that $k_0$ can be expressed as a linear combination of strictly smaller $k_1,k_2$ and $k_3$. By repeating this argument on the $k_i$'s, we can express any $k_0$ as a linear combination of elements of $L'$, proving that $\gcd(L) = \gcd(L')$.  

Let $W \in G$ be a closed $v$ walk of length at least $2n + 2$. There must exist $v^* \in W$ such that $v^*$ is visited at least three times, or $v$ is visited four times.
We handle these two cases as follows:
\begin{enumerate}
  \item If $v$ is visited four times, we can split $W$ in three closed $v$ walks $w_1, w_2, w_3$ of lengths $k_1', k_2', k_3' \in [1, |W|-2]$. We then have $k_0 = k_1' + k_2' + k_3'$.
  \item Otherwise, suppose there exists $v^* \ne v$ that is visited at least
  three times. 
  Split $W$ into four walks at $v^*$, namely $w_1 = v, \dots, v^*$, $w_2 = v^*,
  \dots, v^*$,$w_3 = v^*, \dots, v^*$ and $w_4 = v^*, \dots, v$. Denote by
  $\ell_1,\ell_2,\ell_3$ and $\ell_4$ their respective lengths. Now consider the
  following $v$ closed walks: $X_1 = w_1 + w_4$, $X_2 = w_1 + w_2 + w_4$ and
  $X_3 = w_1 + w_3+ w_4$. Denote by $k_1, k_2$ and $k_3$ their respective
  lengths. Finally, observe that for all $i$ we have $1 \le k_i \le |W|-1$ and
  $k_0 = -k_1 + k_2 + k_3$.
\end{enumerate}
This concludes the proof that $\gcd(L) = \gcd(L')$. We then have that $1 = \gcd(L) = \gcd(L')$. Remark now that this proof is done in $G$, so there must exist two walks $A',B'$ of coprime lengths, lengths at most $2n+1$.
\end{proof}

\end{document}